\definecolor{dkblue}{rgb}{0,0.39,0}
\definecolor{gray}{rgb}{0.66,0.66,0.66}
\definecolor{mauve}{rgb}{0.91,0.33,0.50}
\definecolor{gold}{rgb}{1,0.84,0}
\tiny\color{gray},
\newtheorem{theorem}{Theorem}
\newtheorem{proposition}{Proposition}%
\newtheorem{lemma}{Lemma}%
\newtheorem{remark}{Remark}%
\newtheorem{definition}{Definition}%
\newtheorem{implication}{Implication}%
\begin{document}
\title{QI-MPC: A Hybrid Quantum-Inspired Model Predictive Control for Learning Optimal Policies}
\author*[1]{ \fnm{Muhammad Al-Zafar} \sur{Khan}}\email{Muhammad.Al-ZafarKhan@zu.ac.ae}

\author[1,2]{\fnm{Jamal} \sur{Al-Karaki}}\email{Jamal.Al-Karaki@zu.ac.ae}


\affil*[1]{\orgname{College of Interdisciplinary Studies}, Zayed University, \state{Abu Dhabi}, \country{UAE}}

\affil[2]{\orgname{College of Engineering}, The Hashemite University, \state{Zarqa}, \country{Jordan}}

\abstract{In this paper, we present Quantum-Inspired Model Predictive Control (QIMPC), an approach that uses Variational Quantum Circuits (VQCs) to learn
control polices in MPC problems. The viability of the approach is tested in five experiments: A target-tracking control strategy, energy-efficient building climate control, autonomous vehicular dynamics, the simple pendulum, and the compound pendulum. Three safety guarantees were established for the approach, and the experiments gave the motivation for two important theoretical results that, in essence, identify systems for which the approach works best.}
\keywords{Model Predictive Control (MPC), Quantum Computing, computational efficiency, Modern Control Theory}

\maketitle

\section{Introduction}\label{introduction}

Model Predictive Control (MPC) stands as one of the most important advancements in modern control theory and combines optimization techniques with predictive modeling to anticipate system behaviors over a finite lookahead horizon. Model Predictive Control (MPC) has emerged as a cornerstone in modern control theory, offering a robust framework for managing complex, multivariable systems across diverse industries. By leveraging dynamic models to predict future system behaviors over a finite horizon, MPC enables the formulation of optimal control actions that adhere to system constraints. This predictive capability has been instrumental in applications ranging from chemical process optimization and energy management to advanced driver-assistance systems in the automotive sector. For instance, in building climate control, MPC has been utilized to reduce energy consumption while maintaining thermal comfort, demonstrating its efficacy in handling systems with inherent uncertainties and external disturbances [1].

Despite its advantages, the implementation of MPC can be computationally intensive, particularly for systems with nonlinear dynamics or when operating in real-time environments. To address these challenges, recent research has explored the integration of quantum computing principles into control strategies. Variational Quantum Circuits (VQCs), inspired by quantum mechanics, offer a promising avenue for enhancing computational efficiency in learning control policies. By combining the robustness of MPC with the computational advantages of VQCs, a Quantum-Inspired Model Predictive Control (QI-MPC) approach can be developed. This hybrid methodology aims to optimize control policies more effectively, particularly in complex systems such as autonomous vehicles, pendulum dynamics, and energy-efficient building management. The integration of quantum-inspired techniques into MPC not only addresses computational bottlenecks but also opens new pathways for theoretical advancements in control system design.

Mathematically, the objective of MPC is to solve the optimization problem at each timestep $t$
\begin{equation}
\underset{\mathbf{u}_{k}\in\mathcal{U}}{\min}\;J(\mathbf{x}_{t},\mathcal{U})=\underset{\mathbf{u}_{k}\in\mathcal{U}}{\min}\;\underbrace{\sum_{k=0}^{N-1}j(\mathbf{x}_{k},\mathbf{u}_{k})+V_{f}(\mathbf{x}_{N})}_{J(\mathbf{x}_{0},\mathcal{U})},\quad\text{for}\;k=0,1,2,\ldots,N-1
\end{equation}
where $\mathbf{u}_{k}$ are the control inputs/policies in the set of all feasible policies $\mathcal{U}\subseteq\mathbb{R}^{q}$, $J$ is the objective loss/cost function and $j$ is the stage cost, $V_{f}$ is the terminal cost, $\mathbf{x}_{t}$ are the state vectors in the set of state vectors $\mathcal{X}\subseteq{R}^{s}$ with $\mathbf{x}_{0}$ being the initial state, $\mathbf{x}_{N}\subseteq\mathcal{X}_{f}\subseteq\mathcal{X}$ is the terminal state, over a prediction horizon/lookahead $N$, subject to
\begin{equation}
\left\{
\begin{aligned}
&\mathbf{x}_{k+1}=f(\mathbf{x}_{k},\mathbf{u}_{k}),\quad k=0,1,2,\ldots N-1 \\
&\mathbf{x}_{k}\in\mathcal{X},\quad k=0,1,2,\ldots N-1 \\
&\mathbf{u}_{k}\in\mathcal{U},\quad k=0,1,2,\ldots N-1 \\
&\mathbf{x}_{N}\in\mathcal{X}_{f},\quad\text{if terminal constraints are used}
\end{aligned}
\right. 
\end{equation}
At each timestep $t$, the current state $\mathbf{x}_{t}$ is estimated and the open-loop control problem 
\begin{equation}
\mathbf{u}^{*}=\left(\mathbf{u}_{0}^{*},\mathbf{u}_{1}^{*},\ldots,\mathbf{u}_{N-1}^{*}\right)=\underset{\mathcal{U}}{\arg\min}\;J(\mathbf{x}_{t},\mathcal{U}),
\end{equation}
is solved. Thereafter, the first policy $\mathbf{u}_{t}=\mathbf{u}^{*}_{0}$ is applied and the horizon is sifted forward and the process is repeated. 

Unlike traditional control methods that react solely to current states, MPC leverages an internal dynamic model to simulate future system responses, repeatedly solving an optimization problem at each time step to determine optimal control actions. This approach allows MPC to handle complex constraints, nonlinearities, and multiple objectives simultaneously, making it particularly valuable in scenarios where system dynamics are well-characterized but challenging to control. The real-world impact of MPC extends across numerous industries, demonstrating remarkable versatility and effectiveness. In the petrochemical sector, it has revolutionized refinery operations by optimizing production while maintaining strict safety constraints \cite{richalet1993industrial,zanin2002integrating,qin2003survey,yuzgecc2010refinery,sildir2014economic,wei2017dynamic}. Automotive applications have seen MPC employed in advanced driver assistance systems \cite{wang2012driver,lefevre2015driver,cesari2017scenario,ercan2017modeling,sajadi2019nonlinear,musa2021review} and engine control units \cite{ortner2007predictive,stewart2008model,pizzonia2016robust,hrovat2012development}, significantly improving fuel efficiency and emissions compliance. In addition, has also proven its utility in power systems \cite{kouro2008model,jin2009model,martin2016corrective,liu2019model}, enabling more efficient integration of renewable energy sources by predicting and managing intermittent generation. Other fields of applications where MPC is an indispensable tool are aerospace flight control systems \cite{van2006combined,eren2017model,di2018real}, industrial robotics \cite{faulwasser2016implementation,carron2019data,gold2020model}, building climate control \cite{oldewurtel2012use,oldewurtel2013stochastic,wang2019data}, and medical drug delivery systems \cite{SOPASAKIS2014193,SHARIFI201713,7836354}, and various other applications. MPC has repeatedly delivered performance improvement, enhanced efficiency, and more robust operation in the face of uncertainties and system disturbances.

However, one major challenge is the generation of optimal control sequences/policies. This arises from the need to solve a constrained optimization problem repeatedly, often in real-time, under tight computational constraints. Specifically, the controller is required to find the control policies and must:
\begin{enumerate}
\item Minimize the cost function that balances control objectives.
\item Satisfy system dynamics constraints.
\item Respect state and input constraints.
\item Obtain the optimal control policies within the available sampling period.
\end{enumerate}
The difficulty becomes more pronounced when the system dynamics are nonlinear, when there are longer prediction horizons, there are higher-dimensional state and input/control spaces, the more complex the system constraints, and if there are multiple objective functions. Specifically, for discrete or mixed-integer decision variables, the problem becomes combinatorial in nature. The number of possible control sequences $u$ grows exponentially with the prediction horizon $N$ and the number of control inputs $m$, i.e. $\mathcal{O}(u^{mN})$. This creates a fundamental tension between computational traceability, control performance, and prediction accuracy. Further, industrial applications require solutions within strict time constraints. For fast dynamic systems such as Robotics and Automotive applications, the optimization must be completed within milliseconds, creating severe demand on the controller. In addition, the nature of the objective may create issues of nonconvexity. Specifically, multiple local minima may exist, global optimality guarantees may be difficult to provide or outright impossible, and solution methods may fail to converge. Existing methods like explicit MPC, fast numerical solvers (such as interior-point and active-set methods), approximation methods, hierarchical MPC, and distributed MPC, although reliable, suffer from problems related to the state-space dimensioning growing exponentially as the problem dimensions grow, limited computational memory, scaling limitations, coordination challenges leading to suboptimality, among others. 

This begs the question: ``Is there an efficient way in which we can determine the optimal control sequences that balances computational efficiency and control performance?'' The objective of this research is to answer this exact question. Of course, we are not the first to attempt to address this problem. In Tab. \ref{tab1}, we discuss some of the fortitudes and limitations of the current methods.

\begin{table}[htp!]
\centering
\begin{tabular}{p{5cm}p{4cm}p{5cm}}
\hline 
\textbf{Method} &\textbf{Strengths} &\textbf{Weaknesses} \\
\hline
Multi-parametric programming \cite{KOURAMAS20111638} &Converts online optimization to fast lookup table evaluation. &State space partitioning grows exponentially with state dimension. \\
 & &Has a practical limit to the number of states that it can handle. \\
\hline 
Move blocking \cite{SHEKHAR201527} &Reduces decision variables by fixing control inputs over multiple time steps. &Introduces suboptimality that Is difficult to quantify. \\
 & &If the blocking scheme is poorly chosen, it can lead to violations of the constraints. \\
 & &Reduces the flexibility of the control. \\
\hline 
Input parametrization \cite{7099558} & Represents control trajectory with basis functions to reduce parameters.
 &Basis function selection is not general, i.e. it depends on the specifics of the problem being studied. \\
 & &Not guaranteed to capture the optimal control profiles. \\
 & &Struggles in handling nonlinear constraints. \\
\hline
Machine learning-based MPC \cite{Wu2019} &Uses offline learning to speed up online. optimization. &Struggles with out-of-distribution generalization. \\
 & &Training data does not cover all operational scenarios. \\
 & &Difficult to establish safety guarantees. \\ 
\hline 
Neural network/deep learning-based MPC \cite{10419329} &Fast execution after training the model. 
 &The current black-box nature of deep learning limits the interpretability and the verifiability of the results. \\
 & &Model training requires extensive data and simulations. \\
 & &Satisfying all, or even some, of the constraints, is not guaranteed. \\
\hline
Stochastic MPC \cite{7740982} &Accounts for probabilistic uncertainties in the system model. &Significantly increases computational complexity due to the randomness element. \\
 & &Requires an accurate characterization of the uncertainty. \\
 & &Relies on scenario-based approaches, which scales very poorly. \\
\hline 
Single/multiple shooting-based methods \cite{KIRCHES2012540} &Creates structure in the optimization problem, which is efficient for obtaining solutions. &Struggles with nonlinear systems. \\
 & &Multiple shooting increases the problem size. \\
\hline
Continuation methods \cite{BAAYEN201973} &Uses previous solution to warm-start current problem. &May not recover from poor initial guesses. \\
 & &Performance degrades and becomes suboptimal when system conditions change rapidly. \\
\hline 
\end{tabular}
\caption{Strengths and weaknesses of the current approaches to addressing the problem of selecting optimal control policies in MPC.}
\label{tab1}
\end{table}

Quantum Computing (QC), which has proved its utility in many Machine Learning (ML) applications \cite{innan2023enhancing,innan2023classical,innan2024quantum1,khan2024brain,khan2024predicting,innan2024financial,innan2024quantum2}, potentially offers an alternative approach to solving complex optimization problems in MPC through the use of Variational Quantum Circuits (VQCs) -- A sequence of quantum gates with tunable parameters. By encoding control sequence selection as a quantum optimization problem, VQCs can admissably explore vast solution spaces simultaneously through quantum superposition and entanglement, providing a significant computational advantage over classical methods for complex, nonlinear, and high-dimensional control problems. This quantum approach is particularly promising for systems with combinatorial explosion challenges, where the quantum circuit can naturally represent the complex interactions between control decisions across the prediction horizon. Initial experimental results suggest that quantum algorithms could eventually achieve quadratic or even exponential speedups for certain control optimization problems, explored in greater depth in Sec. \ref{outperformance of classical mpc}, enabling real-time MPC implementation for previously intractable systems. As quantum hardware continues to mature beyond the current noisy intermediate-scale quantum (NISQ) era, this quantum advantage could create a competitive advantage of control strategy selection in fields like autonomous vehicle navigation, chemical process control, and complex robotic systems where traditional methods hit computational barriers. The benefit of this approach are threefold:
\begin{enumerate}
\item \textbf{Facilitation of Parallelism:} Allows for multiple control policies are evaluated simultaneously.
\item \textbf{Enhanced Search:} Entanglement allows better search in the control space.
\item \textbf{Energy Efficiency:} Potentially, computational costs are reduced when solving high-dimensional search problems, and this, in turn, reduces the energy required for computation.  
\end{enumerate}
However, it is worth mentioning that there are some studies in which QC is not applied to solve the MPC problem, but methods inspired by QM are applied. A noteworthy mention is in \cite{de2023nonlinear} whereby the authors attempted to solve the circular restricted three-body problem in real-time on a spacecraft using diffusion Monte Carlo methods that were inspired by quantum mechanics in order to solve the stochastic optimal control problem. It was shown that the method successfully generated feasible trajectories, despite initial condition uncertainties and disturbance accelerations. 

This paper is divided as follows:

In Sec. \ref{related work}, we explore studies that have attempted a similar feat to the one we want in this research.

In Sec. \ref{qcmpc}, we discuss our proposed approach and establish some safety guarantees.

In Sec. \ref{experiments}, we apply our approach to five application areas to give a numerical (in)validation of our approach.

In Sec. \ref{physical results}, we establish some theoretical results based on our observations from the experiments carried out.

In Sec. \ref{conclusion}, we provide closing arguments whereupon we reflect on the results obtained, discuss limitations, and ideate areas of exploration that stem from this study.


\section{Related Work}\label{related work}

Hybrid approaches to MPC have not been explored, to our knowledge, in great detail. We review some of the prominent approaches, with the best in our opinion being the work in conducted in \cite{inoue2020model} whereby the authors argue that due to the combinatorial optimization formulation of MPC problems as being NP-hard, a reformulation of the problem as a QUBO problem can lead to computational efficiency and better results. In particular, the authors use the D-Wave quantum annealer to solve two MPC problems with finite input values: The classical spring-mass-damper system and dynamic audio quantization. The results obtained from this approach were shown to be superior to the standard MPC approach. 

The work in \cite{goldschmidt2022model} shows how MPC enhances quantum control by leveraging real-time feedback, handling constraints, and improving robustness in systems where the Hamiltonian is not well-characterized. Specifically, the authors apply MPC strategies to control quantum processors containing superconducting qubits and cold atoms. It was shown that the approach works well, achieving high fidelity, even with imperfect models, and it can be combined with other contemporary control techniques.    

In \cite{clouatre2022model}, the authors introduce a novel approach whereby they combine Bayesian Hamiltonian learning with MPC for quantum systems. Specifically, the approach studies systems whereby the dynamics are partially known or completely unknown, and the authors design open-loop quantum control sequences that maximize fidelity to a target quantum gate, even with uncertain Hamiltonian parameters. Sequential Monte Carlo is used to estimate the posterior probability distribution and using samples generated from the posterior, stochastic MPC is applied. The proposed approach was tested on a single qubit system with a Hamiltonian that was the linear combination of Pauli gates, and the goal was to find the optimal values of the unknown parameters that best achieve the targeted Hadamard gate.  

In \cite{novara2024quantum}, the authors reformulate nonlinear MPC problems in a polynomial form and use quantum annealers, among other quantum approaches, to solve the underlying problem. The authors claim that their approach has the ability to revolutionize how such nonlinear problems are solved by reducing the computational time. However, no numerical results are provided in order to validate or invalidate this claim. 


\section{Quantum-Informed Model Predictive Control}\label{qcmpc}
In this section, we present and discuss the thought process behind our proposed approach and thereafter establish some safety guarantees of the algorithm.

\subsection{The QI-MPC Algorithm}
We introduce Quantum-Inspired Model Predictive Control (QI-MPC), an approach that operates in a hybrid quantum-classical loop, where a VQC generates control inputs while classical computation handles system evolution and optimization. 

Our thought process in presenting this method starts with encoding the current classical system state into a quantum state by some encoding scheme. Ubiquitous encoding schemes include amplitude encoding, phase encoding, and quantum feature maps, among others. The resulting quantum state is then processed by a parameterized unitary operator -- the ansatz -- which applies a series of rotational and entangling gates to transform the state. The control outputs are then measured in the computational basis, yielding a set of scalar values that represent the proposed policies. These raw outputs are clipped to enforce physical constraints, ensuring they lie within the admissible range. Moreover, by clipping the controls, exploding gradients are avoided in future steps where the gradients are calculated, and this minimizes instabilities and divergences. 

Once the constrained controls are obtained, they are applied to the system dynamics to compute the next state. Thereafter, in order to balance the deviation of the new state from the target state, the regularized loss function is calculated. Regularization is added to the loss function to penalize any large deviations from the difference between successive control sequences. It can easily be extended to include losses from the learned parameters in the VQC itself. We advocate for the use of gradient-based methods, like the parameter-shift rule, for updates because they are quantum-compatible, enable analytical gradient computation without backpropagation, and avoid complex automatic differentiation. However, this can easily be adjusted to any suitable parameter update rule.

In order to make our approach more robust, we adopt the receding-horizon approach, as in classical MPC, where only the first control action is executed, and the optimization window shifts forward in time. This allows the process to continuously adapt to new state information, maintaining rigidity against uncertainties and disturbances that may arise.

We formally summarize these thought processes in Algorithm \ref{algo1}.

\begin{algorithm}[H]
\caption{QI-MPC}
\label{algo1}
\small{
\begin{algorithmic}[1]
\State \textbf{input} system dynamics $f(\mathbf{x}_{k},\mathbf{u}_{k})$, initial state $\mathbf{x}_{0}$, target state $\mathbf{x}_{\text{target}}$, prediction horizon $N$, control timesteps $T\gg N$, number of qubits $n$, optimization bounds $\left[u_{\min},u_{\max}\right]$, control dimensions $m$, penalization parameter $0\leq\lambda\leq1$
\State \textbf{output} optimal control sequence $\mathbf{u}^{*}$, optimal state trajectory $\mathbf{x}^{*}$
\State \textbf{initialize} quantum device with $n$ qubits, VQC $U(\boldsymbol{\theta})$, classical optimizer, tolerance $\epsilon$ (optional)
\Repeat
\For {$k$ from $0$ to $T$} \Comment{iterating over all control timesteps $T$}
\State encode current state into quantum state $\ket{\psi(\mathbf{x}_{k})}=\text{encode}(\mathbf{x}_{k})\ket{0}^{\otimes n}$
\State apply the ansatz/unitary operator $\ket{\psi(\mathbf{x}_{k},\boldsymbol{\theta})}=U(\boldsymbol{\theta})\ket{\psi(\mathbf{x}_{k})}$  
\State perform measurement on the control outputs 
\begin{equation*}
\mathbf{u}_{k}=\left(\bra{\psi(\mathbf{x}_{k},\boldsymbol{\theta})}Z_{1}\ket{\psi(\mathbf{x}_{k},\boldsymbol{\theta})},\bra{\psi(\mathbf{x}_{k},\boldsymbol{\theta})}Z_{2}\ket{\psi(\mathbf{x}_{k},\boldsymbol{\theta})},\ldots,\bra{\psi(\mathbf{x}_{k},\boldsymbol{\theta})}Z_{m}\ket{\psi(\mathbf{x}_{k},\boldsymbol{\theta})}\right)
\end{equation*}
\State Limit the controls to satisfy the constraints $\tilde{\mathbf{u}}_{k}=\text{clip}(\mathbf{u}_{k},u_{\min},u_{\max})$
\State apply the controls to determine the system evolution $\mathbf{x}_{k+1}=f(\mathbf{x}_{k},\tilde{\mathbf{u}}_{k})$
\State compute the loss $\mathcal{J}=||\mathbf{x}_{k+1}-\mathbf{x}_{\text{target}}||^{2}+\lambda||\tilde{\mathbf{u}}_{k}||^{2}$
\State update the trainable parameters
\begin{equation*}
\boldsymbol{\theta}\gets\text{classical\_optimizer}(\nabla\boldsymbol{\theta})    
\end{equation*}
\State shift prediction horizon to perform the receding horizon update
\EndFor
\Until {$N$ is reached or $||\mathbf{x}-\mathbf{x}_{\text{target}}||<\epsilon$}
\State \textbf{return} optimal policies and trajectories
\begin{equation*}
\mathbf{u}^{*}=\left(u_{0}^{*},u_{1}^{*},\ldots,u_{T-1}^{*}\right),\quad \mathbf{x}^{*}=\left(x_{0}^{*},x_{1}^{*},\ldots, x_{T-1}^{*}\right)
\end{equation*}
\end{algorithmic}
}
\end{algorithm}

Algorithm \ref{algo1} repeats until either the prediction horizon is exhausted or the state converges to the target with some tolerable range $\epsilon$.  

We believe that the strength on our QI-MPC approach lies in its ability to handle high-dimensional state spaces efficiently, leveraging quantum parallelism to explore multiple control policies simultaneously. However, the performance of Algorithm \ref{algo1} is contingent on several factors, including the smoothness of the system dynamics and the alignment of quantum measurement timescales with the control requirements. As an illustrative example, if systems with rapid oscillations or discontinuous dynamics are explored, they may challenge the algorithm's stability, as the quantum-classical feedback loop might not react swiftly enough. Similarly, the energy scales of macroscopic systems could render the quantum-generated controls ineffective due to their inherently small magnitudes.

\subsection{Safety Guarantees of QI-MPC}\label{safety guarantees}
To ensure the QI-MPC algorithm operates safely and reliably, we establish formal safety guarantees through constraint satisfaction, stability analysis, and quantum measurement robustness.

\begin{enumerate}
\item \textbf{Formal Verification I:} The control inputs $\tilde{\mathbf{u}}_{k}$ always remain within bounds. This is because $\tilde{\mathbf{u}}_{k}$ is projected to $\left[u_{\min},u_{\max}\right]\;\forall k$, i.e.,
\begin{equation}
\tilde{\mathbf{u}}_{k}=\text{clip}(\mathbf{u}_{k},u_{\min},u_{\max})=\max\left\{u_{\min},\min\left(\mathbf{u},u_{\max}\right)\right\}. 
\end{equation}
The physical interpretation of this guarantee is that it prevents actuator saturation and ensures that the policies are realizable.
\item \textbf{Formal Verification II:} For closed-loop systems, QI-MPC is stable. Assuming that the system dynamics $f(\mathbf{x}_{k},\mathbf{u}_{k})$ is Lipschitz continuous, with Lipschitz constant $K$, i.e.
\begin{equation*}
||f(\mathbf{x},\mathbf{u})-f(\mathbf{x}',\mathbf{u}')||\leq K\left(||\mathbf{x}-\mathbf{x}'||+||\mathbf{u}-\mathbf{u}'||\right),
\end{equation*}
and the cost $j$ at each stage is positive definite, then by the stability theorem, the algorithm is stable,
\begin{equation}
\lim_{k\to\infty}\mathbf{x}_{k}=\mathbf{x}_{\text{target}}.
\end{equation}
This is because the cost at each stage $j$ is a control Lyapunov function and the receding horizon ensures recursive feasibility, i.e. if the system is stable at $k=0$, then it remains stable $\forall k=1,2.\ldots N-1$.
\item \textbf{Formal Verification III:} The output control sequence is reliable despite the noise arising from the quantum circuit. Each time a measurement is done in a quantum circuit, measurement noise $\eta\sim\mathcal{N}(0,\sigma^{2})$ (which we assume is normally distributed, for all intents and purposes, with mean $0$ and variance $\sigma^{2}$), i.e., 
\begin{equation*}
\mathbf{u}_{k}=\bra{\psi(\mathbf{x}_{k},\boldsymbol{\theta})}Z_{i}\ket{\psi(\mathbf{x}_{k},\boldsymbol{\theta})}+\eta_{k}.
\end{equation*}
By Hoeffding's inequality, the probability of the deviation between the control and its average value is bounded by
\begin{equation*}
\mathbb{P}\left(|\mathbf{u}_{k}-\mathbb{E}(\mathbf{u}_{k})|\geq\epsilon\right)\leq 2e^{-2\epsilon^{2}\mathcal{M}},    
\end{equation*}
for $\mathcal{M}$ readouts (executions of the quantum circuit and measurement of the final state). For sufficiently large $\mathcal{M}$, the error is exponentially suppressed and we have that
\begin{equation}
 ||\mathbf{u}_{k}-\mathbb{E}(\mathbf{u}_{k})||\leq\epsilon.    
\end{equation}
\end{enumerate}


\section{Experiments}\label{experiments}
In this section, we present the application of Algorithm \ref{algo1} to five experiments: The first with dynamics described by a target-tracking proportional control strategy, the second to energy efficient climate control strategies in smart buildings, the third to autonomous vehicles, the fourth to a simple pendulum control problem, and the fifth to a compound pendulum control problem. For each of the experiments, we use the quantum ansatz to find the optimal control sequence and thereafter graphically investigate the performance of the method by examining the associated quantum circuit, the evolution of the states, the evolution of the controls, and the loss curve. 

\subsection{Experiment 1: A Target-Tracking Proportional Control Strategy}\label{experiment1}
The initial states $\mathbf{x}_{0}\in\mathbb{R}^{3}$ are encoded into quantum states using rotation gates according to
\begin{equation}
\mathbf{x}_{0}\to\ket{\mathbf{\psi}(\mathbf{x})}=\prod_{i=1}^{3}RY(\pi x_{i})\cdot RX\left(\pi\left(x_{i}+\frac{1}{2}\right)\right)\cdot RZ\left(\pi\frac{x_{i}}{2}\right)\ket{0}^{\otimes n},
\end{equation}
where $RX, RY, RZ$ are arbitrary rotations about the $x,y,z$-axes, respectively, defined as
\begin{equation}
RX(\theta)=
\begin{pmatrix}
\cos\theta/2 &-\imath\sin\theta/2 \\
-\imath\sin\theta/2 &\cos\theta/2
\end{pmatrix}, \quad 
RY(\theta)=
\begin{pmatrix}
\cos\theta/2 &-\sin\theta/2 \\
\sin\theta/2 &\cos\theta/2
\end{pmatrix}, \quad 
RZ(\theta)=
\begin{pmatrix}
e^{-\imath\theta/2} &0 \\
0 &e^{\imath\theta/2}
\end{pmatrix},
\end{equation}
for an arbitrary angle $\theta$. Thereafter, the ansatz, comprised of parametrized quantum layers, is applied where trainable rotation and entanglement gates are used, according to
\begin{equation}
U(\boldsymbol{\theta})=\prod_{t=1}^{T}\left(\prod_{q=1}^{n}\text{Rot}(\theta_{t,q})\cdot\prod_{\langle i,j\rangle}\text{CNOT}_{i,j}\right),
\end{equation}
where $\theta_{t,q}$ are the trainable parameters, $\text{Rot}(\theta)=RZ(\theta_{1})\cdot RY(\theta_{2})\cdot RZ(\theta_{3})$, and the CNOT gates are used to create entanglement between qubit pairs $\langle i,j\rangle$. Lastly, the control inputs $\mathbf{u}\in\mathbb{R}^{3}$ is obtained by taking expectation values of the Pauli $Z$ operators according to
\begin{equation}
u_{k}=\bra{\psi(\mathbf{x})}Z_{k}\ket{\psi(\mathbf{x})},\quad\text{for}\;k=1,2,3.
\end{equation}
Within this experiment, we assume the simplified dynamics given by the target-tracking/reference-following control strategy
\begin{equation}
\mathbf{x}(t+1)=\mathbf{x}(t)+\alpha\left[\mathbf{u}(t)-\mathbf{x}(t)\right],
\end{equation}
with a learning rate $\alpha=0.1$, and we minimize the MSE loss
\begin{equation}
\mathcal{L}(\mathbf{x};\mathbf{x}_{\text{target}})=||\mathbf{x}-\mathbf{x}_{\text{target}}||_{2}^{2}. 
\end{equation}

\begin{figure}[H]
\centering
\includegraphics[width=1.2\linewidth]{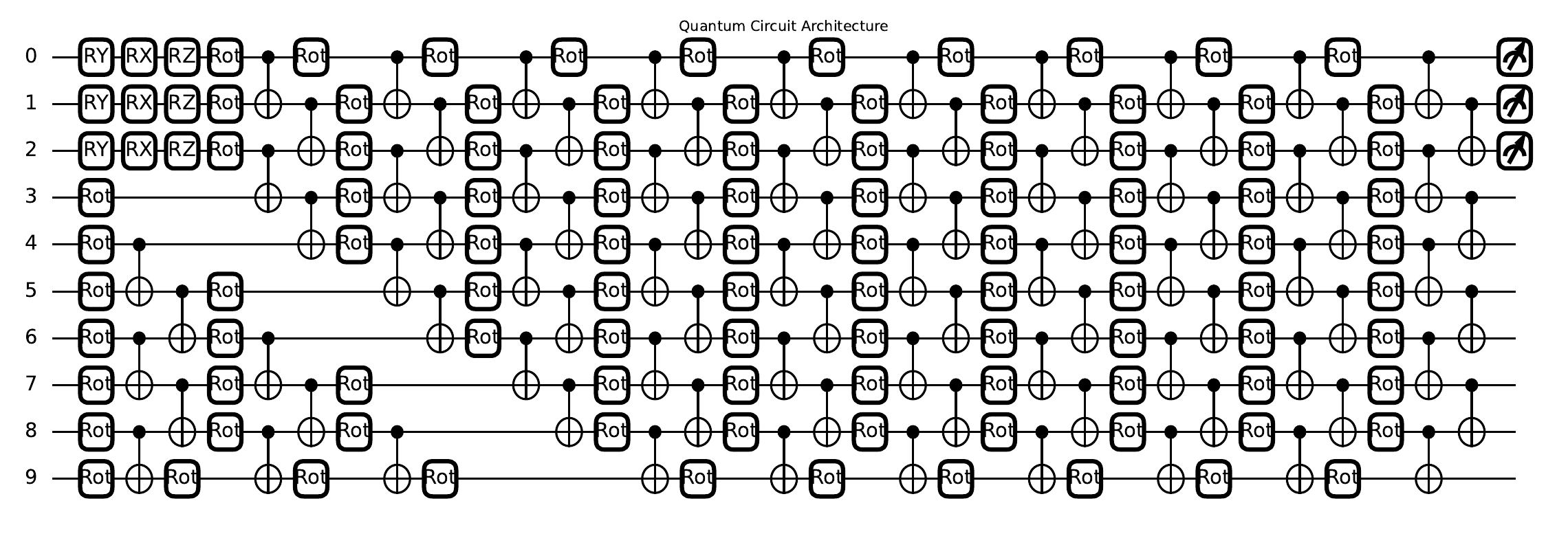}
\caption{The quantum circuit used to determine the optimal control actions for the state dynamics $\mathbf{x}_{k}(t+1)=\mathbf{x}_{k}(t)+\alpha\left[\mathbf{u}_{k}(t)-\mathbf{x}_{k}(t)\right]$.}
\label{fig4}
\end{figure}

\begin{figure}[H]
\centering
\includegraphics[width=1\linewidth]{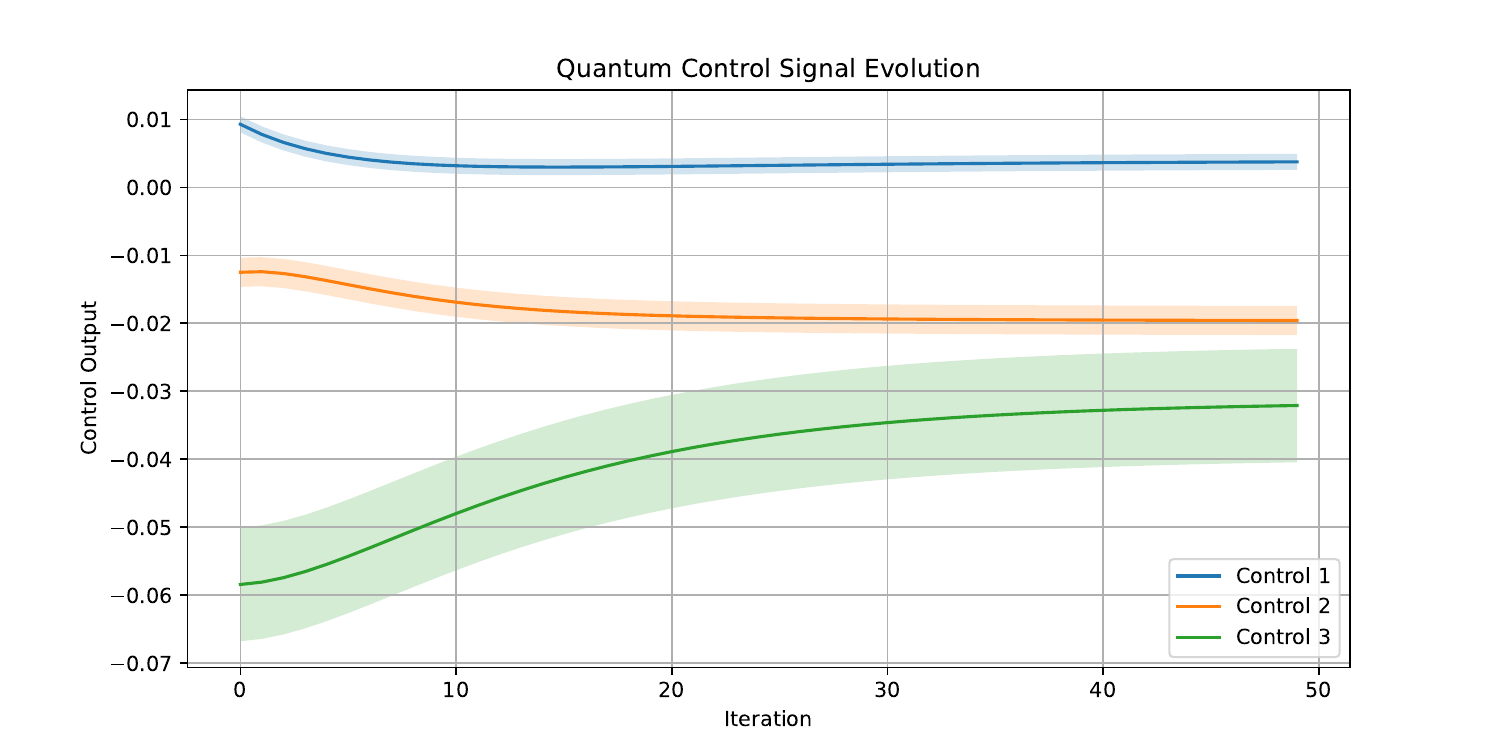}
\caption{Control signals for the controls $\mathbf{u}=\left(u_{1},u_{2},u_{3}\right)\in\mathbb{R}^{3}$.}
\label{fig1}
\end{figure}

\begin{figure}[H]
\centering
\includegraphics[width=1\linewidth]{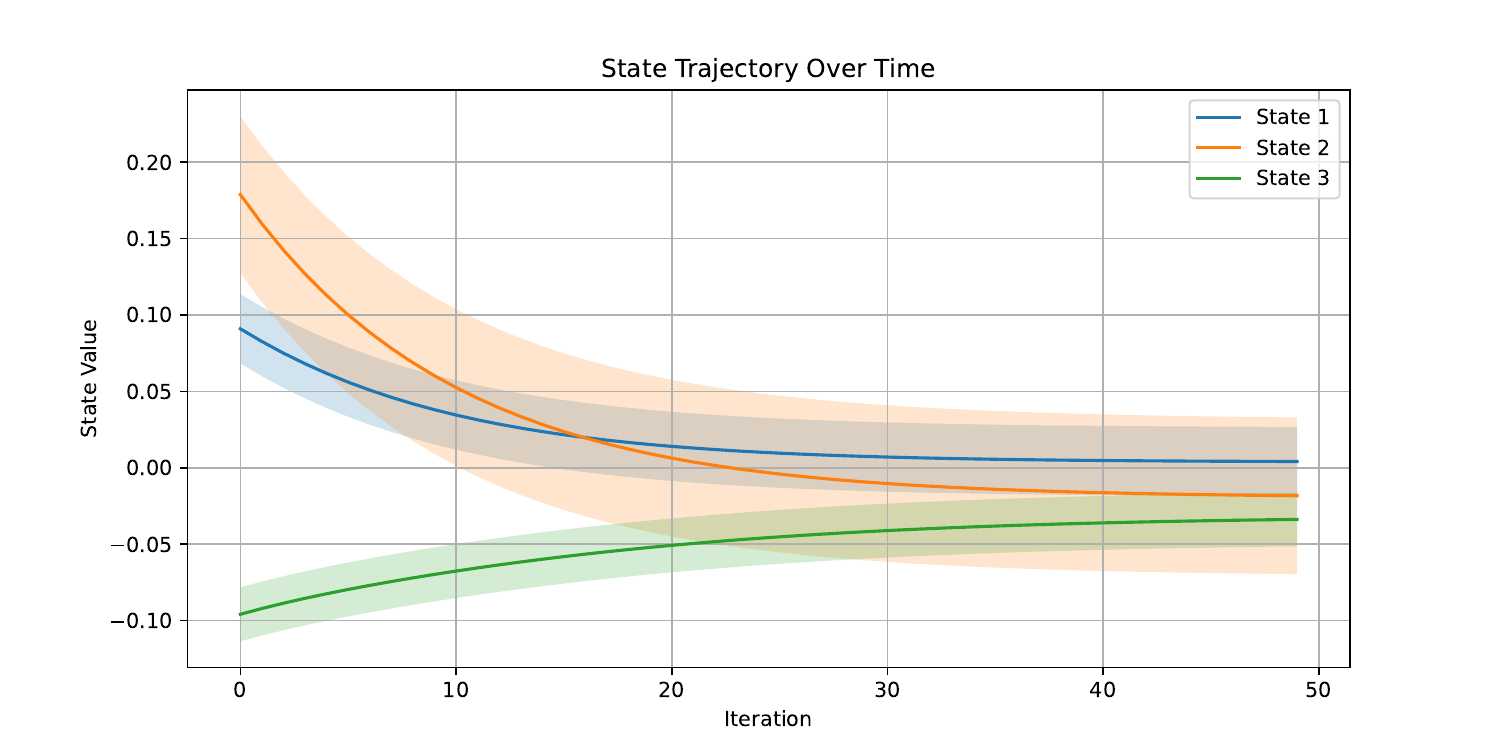}
\caption{State trajectories $\mathbf{x}=\left(x_{1},x_{2},x_{3}\right)\in\mathbb{R}^{3}$.}
\label{fig3}
\end{figure}

\begin{figure}[H]
\centering
\includegraphics[width=1\linewidth]{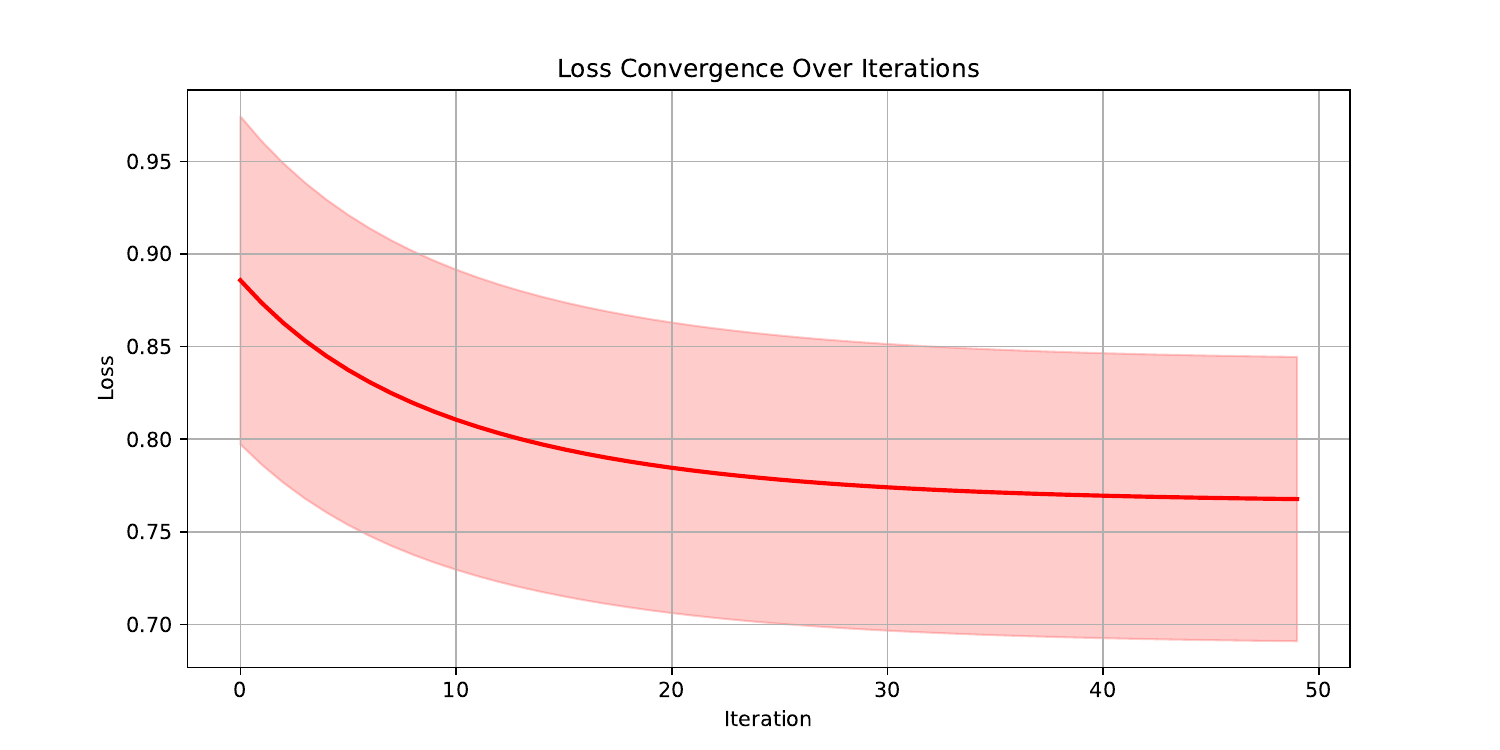}
\caption{Loss function convergence.}
\label{fig2}
\end{figure}

The parametrized quantum circuit in Fig. \ref{fig4} is used to compute quantum control signals for guiding the dynamical system toward a target state. Specifically, the circuit is composed of 10 qubits (labeled 0 to 9). The circuit is composed of single-qubit ($RX, RY, RZ$) and two-qubit (CNOT) gates in order to create entanglement and measurement operators on wires 0, 1, and 2. The single qubit rotation gates act as state encoders in order to map the classical states to quantum states. The parametrized layers consist of arbitrary rotation gate operators (Rot) defined by
\begin{equation}
\text{Rot}(\Phi_{1},\Phi_{2},\Phi_{3})=RZ(\Phi_{1})RY(\Phi_{2})RX(\Phi_{3}),
\end{equation}
where $\Phi_{1},\Phi_{2},$ and $\Phi_{3}$ act as trainable parameters. The circuit follows a repetitive pattern by applying rotation gates to all qubits and CNOT gates in order to create entanglement by flipping the target qubit conditioned on the control qubit's state. At the end of the circuit, three qubits, each corresponding to the respective control output, are measured using Pauli-$Z$ expectation values. The measured values are mapped back to classical control values, which are then used in the MPC loop to update the system's state. 

In Fig. \ref{fig1}, we observe the evolution of quantum control signals over multiple iterations in our proposed scheme. The three curves correspond to three control variables $\mathbf{u}=\left(u_{1},u_{2},u_{3}\right)$ over 50 iterations. The shading around each curve likely represents the confidence interval at one standard deviation of the control values; this provides insight into the stability and uncertainty of the learned control signals. The curves exhibit smooth convergence with no discontinuities; this indicates that the quantum controller successfully adjusts the control signals over time. $u_{1}$ (blue) and $u_{2}$ (orange) appear to stabilize rapidly within the first 10 to 15 iterations. This suggests they are closer to an optimal control solution early in the training process. However, $u_{3}$ (green) gradually increases over time. This indicates that this control variable undergoes a more dynamic adjustment. Further, the variance for $u_{1}$ and $u_{2}$ is low, as is evidenced by the feint shading around these lines; this indicates that they are more stable. Conversely, the shading around $u_{3}$ is more pronounced. This indicates that the model struggles more with this control variable before stabilizing. We note that these control signals are derived from quantum circuit measurements in Fig. \ref{fig4}, meaning they are not directly optimized in a classical sense but instead obtained from quantum state evolution. Thus, the controller adapts iteratively based on feedback, and the structure of the quantum circuit and measurement noise influence the evolution of these signals.

From Fig. \ref{fig3}, we see that the states converge toward a steady state over time. This means that the controller is effectively stabilizing the system. $x_{1}$ (blue) and $x_{2}$ (orange) exhibit an exponential decay pattern, which is indicative of a stable feedback control system, and we can say that these two curves have profiles that are approximated by
\begin{equation}
x_{k}\approx x_{\infty}+\left(x_{0}-x_{\infty}\right)e^{-\beta k},\quad\text{for}\;k=1,2,
\end{equation}
where $x_{\infty}$ is the long-term steady state value achieved and $\beta$ is the decay constant. As with Fig. \ref{fig1}, the shaded regions around the curves indicate confidence intervals, signifying uncertainty bounds, in the predicted state values. We see that $x_{2}$ has the highest initial variance, indicating it is initially more sensitive to control fluctuations, while $x_{1}$ and $x_{3}$ have smaller uncertainty bounds, implying that their trajectories are more predictable. Overall, these suggest that the quantum controller learns an effective control strategy over time and reduces the system variability, and the control signals are effective in steering the system states toward a desired target.

From Fig. \ref{fig2}, we see that the loss function decreases over iterations exhibiting almost a logarithmic-like or exponential decay-like profile, confirming that the quantum-based controller is learning an effective control policy and that the optimization process follows a gradient-based improvement strategy. Further, the profile suggests that the loss asymptotically approaches a lower bound, indicating that further improvements become marginal after a certain number of iterations. Around iteration 40, the decrease in loss is minimal, implying that further iterations offer diminishing improvements. The final loss level of $J\approx 0.75$ suggests that there may still be room for improvement if a better control policy is found. We attribute the quantum controller not reaching absolute zero error due to the stochasticity and noise due to quantum decoherence in the circuit. 

\subsection{Experiment 2: Energy Efficient Building Climate Control}\label{experiment4}
We consider the scenario of modeling the temperature of a room using an energy-efficient climate control in a smart building. The goal is to optimize the HVAC (Heating, Ventilation, and Air Conditioning) system to minimize energy consumption while maintaining thermal comfort. The temperature dynamics of the room can be described using the heat balance equation derived from thermal resistance-capacitance (RC) models
\begin{equation}\label{heating eqn}
C\frac{dT_{\text{room}}}{dt}=\frac{T_{\text{outdoor}}-T_{\text{room}}}{R}+P_{\text{HVAC}}+Q_{\text{solar}}+Q_{\text{occupants}},
\end{equation}
where $C$ is the thermal capacitance of the building, $T_{\text{room}}$ is the room temperature, $T_{\text{outdoor}}$ is the outdoor temperature, $R$ is the thermal resistance between the inside and outside air, $P_{\text{HVAC}}$ is the heating (or cooling) power of the HVAC system, $Q_{\text{solar}}$ is the heat gain from solar radiation, and $Q_{\text{occupants}}$ is the heat gain from the people inside the building. Using Euler's method, we discretize \eqref{heating eqn} to 
\begin{equation}
x_{k+1}=x_{k}+\left(\frac{T_{\text{out},k}-x_{k}}{RC}+\frac{u_{k}}{C}+\frac{Q_{\text{solar},k}+Q_{\text{occupants},k}}{C}\right)\Delta t,
\end{equation}
where we have defined the state variable $x_{k}\overset{.}{=}T_{\text{room},k}$ and the control $u_{k}\overset{.}{=}P_{\text{HVAC},k}$. The QMPC controller is used to minimize energy consumption while ensuring the temperature stays within comfort limits. Therefore, we write the cost function as
\begin{equation}\label{loss for room heating}
J=\sum_{k=0}^{N}\left(||x_{k}-x_{\text{set point}}||^{2}+\lambda_{1}||u_{k}||^{2}+\lambda_{2}||\Delta u_{k}||^{2}\right),
\end{equation}
where $x_{\text{set point}}$ is the desired temperature/temperature at which the thermostat has been set in the room, $\Delta u_{k}=u_{k}-u_{k-1}$ is the change in HVAC power for heating/cooling, $J_{\text{quantum}}$ is the loss from the VQC, and $0\leq\lambda_{1},\lambda_{2}\leq 1$ are the penalization weights. Specifically, $\lambda_{1}$ controls the penalty term $||u_{k}||^{2}$ by discouraging large control actions and $\lambda_{2}$ is the weight associated with the smoothness penalty $||\Delta u_{k}||^{2}$ by ensuring smooth control transitions over time. The loss function in \eqref{loss for room heating} is solved subject to the constraints:
\begin{enumerate}
\item \textbf{HVAC Actuation Limits:} The HVAC system has actuators with an operating range that serve to prevent damage to the system, ensure proper operation, and maintain system efficiency. Mathematically, we express this as $0\leq P_{\text{HVAC}}\leq P_{\max}$.
\item \textbf{Temperature Comfort Range:} The room temperature must stay within a comfortable range. Typically, smart systems try to maintain a range of $20-24^{\circ}\text{C}$. Mathematically, we express this as $T_{\min}\leq x_{k}\leq T_{\max}$. 
\end{enumerate}

We simulate the system above using the parameter values $R=0.5\;\text{K/W}, C=1\;\text{J/K}, P_{\max}=10\;\text{W}, T_{\min}=20^{\circ}\text{C}, T_{\max}=24^{\circ}\text{C}, T_{\text{outdoor}}=15^{\circ}\text{C}, Q_{\text{solar}}=5\;\text{W}, Q_{\text{occupants}}=3^{\circ}\text{C}$, we chose a temperature decay value in the dynamics equation to be 0.1. The model was trained for 200 iterations, with penalty weights $\lambda_{1}=0.01$ and $\lambda_{2}=0.005$. Lastly, we take the target states are $x_{k}=22$ for $k=1,2,3$. This implies that the set point temperature is $22^{\circ}\text{C}$. 

\begin{figure}[H]
\centering
\includegraphics[width=1\linewidth]{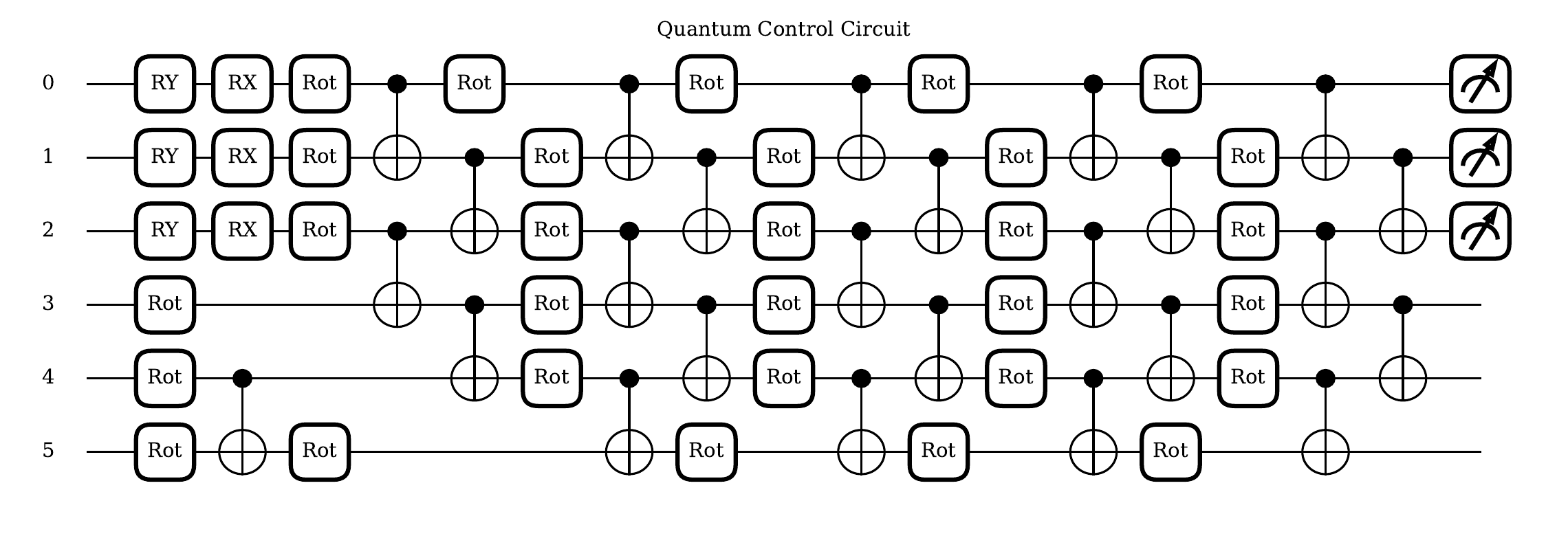}
\caption{Quantum circuit simulating the optimal choice of controls for the energy efficient climate control model.}
\label{fig11}
\end{figure}

\begin{figure}[H]
\centering
\includegraphics[width=1\linewidth]{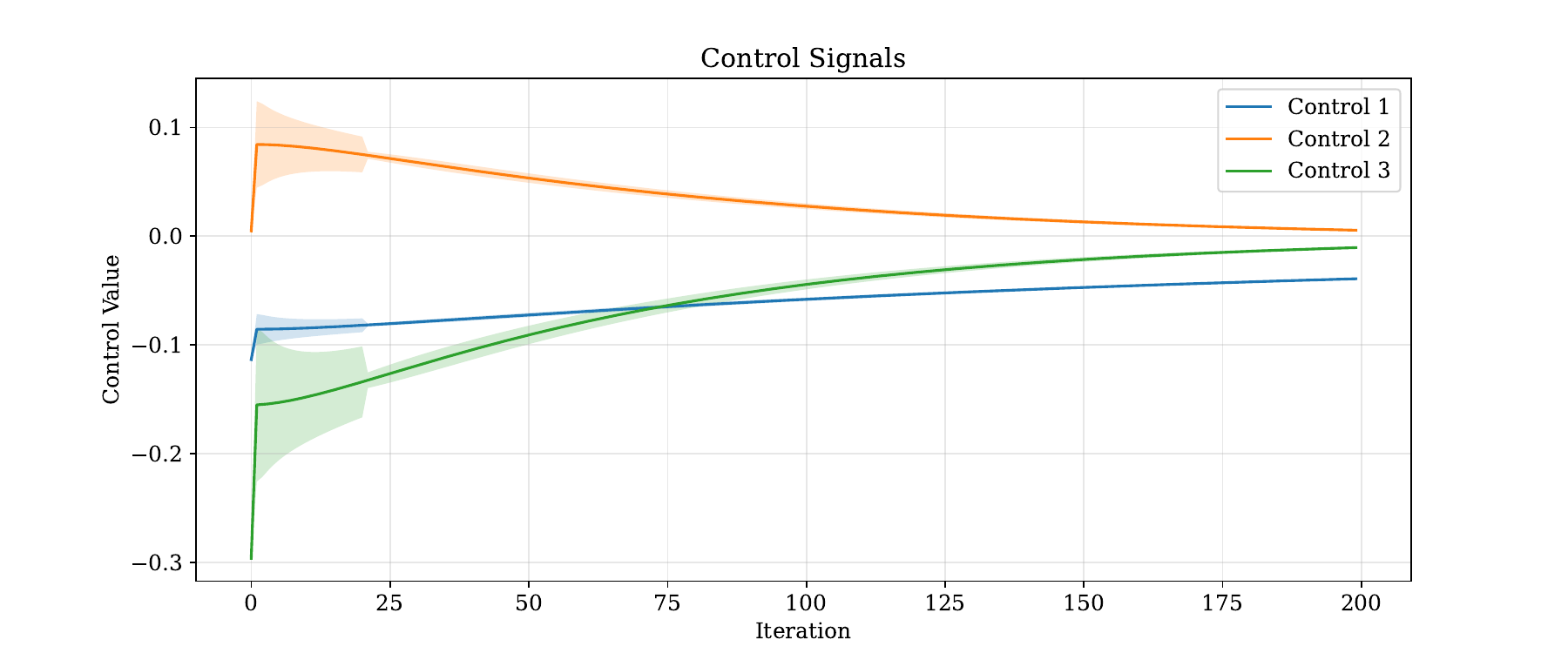}
\caption{Optimal control response variable (in this case, the power delivered by the HVAC system) for the energy efficient climate control model.}
\label{fig12}
\end{figure}

\begin{figure}[H]
\centering
\includegraphics[width=1\linewidth]{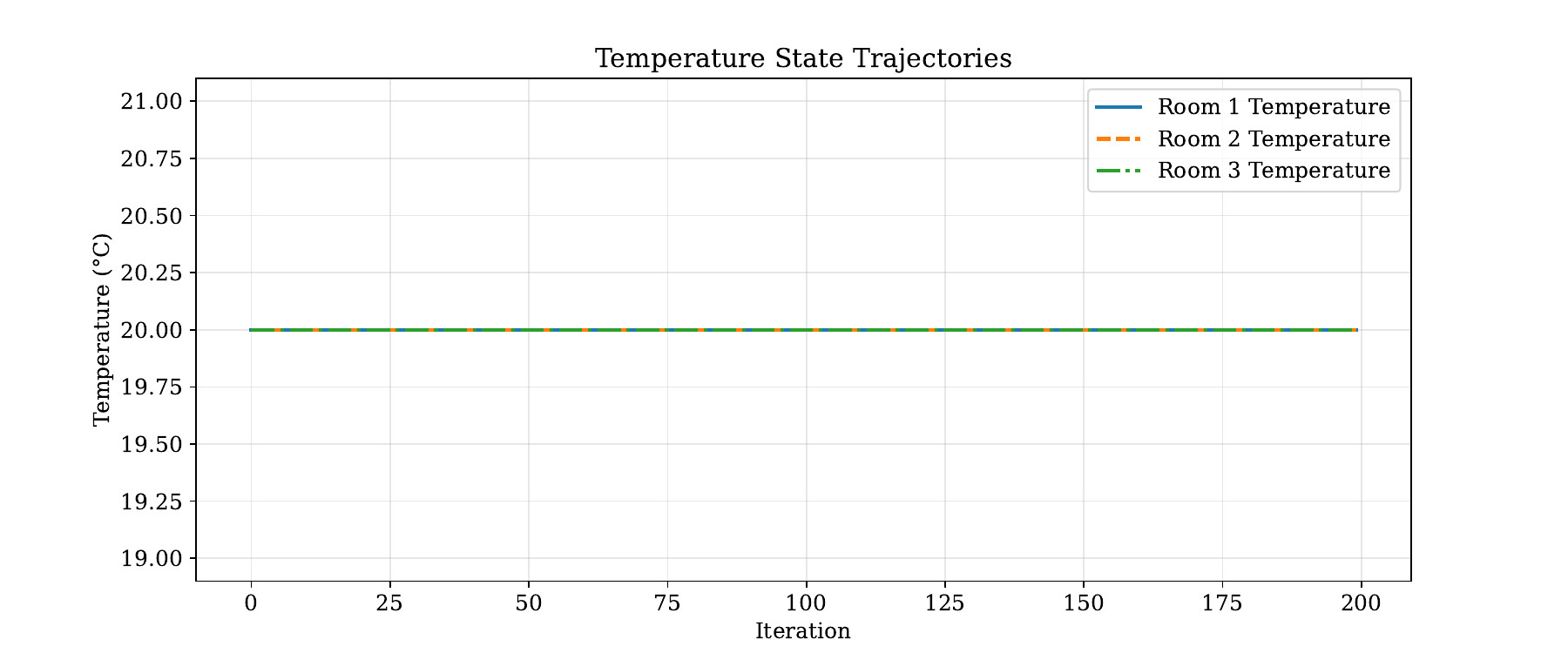}
\caption{Optimal trajectories for the state variable $T_{\text{room},k}$ for the energy efficient climate control system.}
\label{fig13}
\end{figure}

\begin{figure}[H]
\centering
\includegraphics[width=1\linewidth]{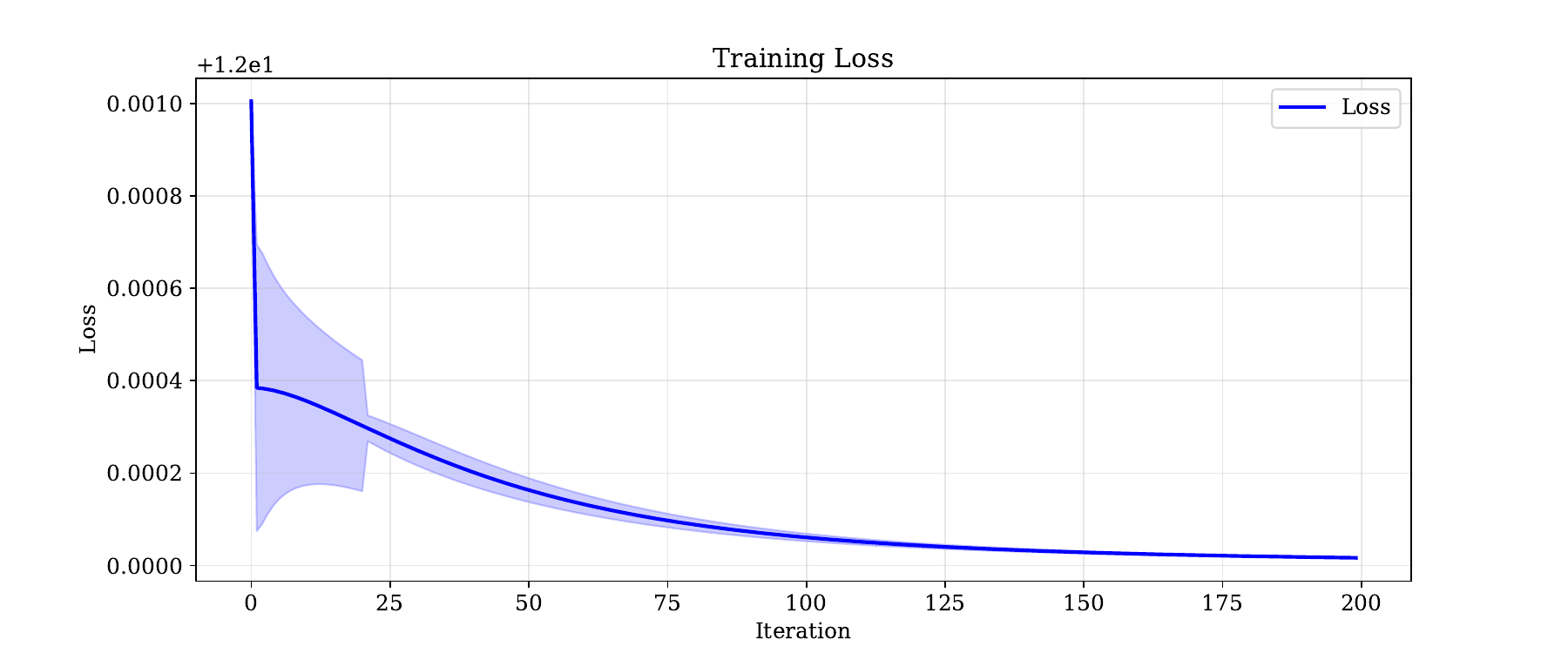}
\caption{Loss function for the energy efficient climate control model.}
\label{fig14}
\end{figure}

From \ref{fig11}, we use a quantum circuit consisting of 6 qubits, with state preparation occurring at qubits 0, 1, and 2. The parametrized rotations are used to learn the optimal parameters, and the CNOT gate creates entanglement. Measurement occurs at qubits 0, 1, and 2 in the computational basis. 

In Fig. \ref{fig12}, we see that all three control signals exhibit significant fluctuations in the early iterations, approximately within the first 25 iterations. This is characteristic of the algorithm exploring the search space in the presence of noise and the complex cost landscapes in \eqref{loss for room heating}. Further, this is also potentially attributed to the quantum circuit components. The shaded regions are wider during this phase, which indicates higher uncertainty in the control signal values. As the iterations progress, the fluctuations dampen, and the control signals appear to converge towards more stable values. This suggests that the optimization algorithm finds a region of the parameter space that yields a lower cost. For control $u_{1}$ (blue), the signal starts with negative values. It gradually increases towards a plateau around $-0.05$, and the uncertainty associated with $u_{1}$ decreases significantly after the initial phase, implying a more confident determination of its optimal value. For control $u_{2}$ (orange), the signal starts with a value close to 0, rapidly increases to a peak around 0.1, and then gradually decreases towards a value close to zero by the end of the 200 iterations. The uncertainty in $u_{2}$ also decreases over time, indicating increasing confidence in its trajectory. The initial spike may be necessary to steer the system towards a desired state before tending towards an equilibrium. For $u_{3}$, the signal starts with a relatively large negative value and steadily increases towards a value slightly below 0. The uncertainty in $u_{3}$ also diminishes as the optimization progresses. $u_{3}$ appears to play a role in counteracting some initial deviation or driving the system in a specific direction. 

From Fig. \ref{fig13}, we observe that there is no discernible change or fluctuation in the temperature of any of the rooms. All three temperature trajectories are identical, as evidenced by the feint overlaps of their trajectories, and remain constant at $20^{\circ}\text{C}$. We attribute this constant value for all three state trajectories to either of three reasons:
\begin{enumerate}
\item The simulation might have started with all rooms already at the desired temperature of $20^{\circ}\text{C}$. If the control objective is to maintain this temperature and the system is well-behaved, the MPC controller would ideally apply minimal to no control actions, resulting in constant temperatures.
\item Through the utilization of the quantum properties, the MPC controller could have very quickly identified and applied the optimal control actions in the initial iterations to bring the room temperatures to the set point temperature of $20^{\circ}\text{C}$. Furthermore, these control actions are maintained throughout the simulation, which indicates that the system is in a stable state where no further adjustments are needed. 
\item The model of the thermal dynamics of the rooms used within the MPC might be too simplistic. 
\end{enumerate}
We believe that a combination of these factors resulted in this trend that we observed. Potentially, with more realistic modeling of the room temperature changes, the state trajectory variable would exhibit more expected profiles.

In Fig. \ref{fig14}, we see that the training loss starts at a relatively high value and decreases sharply during the first 25 iterations. This indicates that the algorithm makes significant progress in finding better parameter values for the quantum components, leading to a substantial improvement in the controller's performance. After the initial rapid decrease, the loss continues to decrease but at a slower rate. This is typical of optimization algorithms as they approach a local minimum, and the smaller gradients in this region lead to smaller updates in the parameters. The shaded blue area around the loss curve, representing the variance in the loss, is also relatively large in the initial iterations but decreases significantly as training progresses. This suggests that the optimization process becomes more stable and converges towards a consistent region of the parameter space. The initial high variance is due to two reasons: The stochastic nature of quantum measurements and the exploration phase of the algorithm.

\subsection{Experiment 3: Autonomous Vehicular Dynamics on a Curvilinear Road}\label{autonomous vehicle experiment}
For an autonomous vehicle navigating a curvilinear road, we consider both longitudinal and lateral dynamics. We define $\mathbf{x}=\begin{pmatrix}x_{1} &x_{2} &x_{3} &x_{4}\end{pmatrix}^{T}$ with $x_{1}\overset{.}{=}s$ being the longitudinal position along the road, $x_{2}\overset{.}{=}v$ being the longitudinal velocity, $x_{3}\overset{.}{=}y$ being the lateral deviation from the centerline of the road, and $x_{4}\overset{.}{=}\vartheta$ is the heading angle relative to the road tangent. We define the control inputs to be $\mathbf{u}=\begin{pmatrix}u_{1} &u_{2}\end{pmatrix}^{T}$ with $u_{1}$ being the traction/force required for braking, and $u_{2}$ being the steering angle. Using a curvilinear frame of reference, the dynamics of the autonomous vehicle can be described by the nonlinear coupled system of ODEs
\begin{equation}\label{vehicle dynamics equation}
\left\{
\begin{aligned}
\dot{x}_{1}=&\;x_{2}\cos x_{4}, \\
\dot{x}_{2}=&\;\frac{1}{m}\left(u_{1}-F_{\text{drag}}-F_{\text{rolling}}\right), \\
\dot{x}_{3}=&\;x_{2}\sin x_{4}, \\ 
\dot{x}_{4}=&\;\frac{1}{L}x_{2}\tan u_{2}-\kappa(x_{1})x_{2},
\end{aligned}
\right. 
\end{equation}
where $m$ is the mass of the vehicle, $L$ is the length of the wheelbase, $F_{\text{drag}}=\frac{1}{2}C_{d}\rho Ax_{2}^{2}$ is the drag force with $C_{d}$ being the drag coefficient, $\rho$ is the density of air, $A$ is the cross-sectional area, $F_{\text{rolling}}=C_{r}mg$ with $C_{r}$ being the force of resistance due to rolling and $C_{r}$ rolling friction coefficient, and $\kappa(\cdot)$ being the curvature of the road. We note that the equations in \eqref{vehicle dynamics equation} are a gross simplification of the models derived in \cite{kong2015kinematic,xu2019design}. Using Euler's method, we can discretize \eqref{vehicle dynamics equation} to obtain
\begin{equation}
\left\{
\begin{aligned}
x_{1}(k+1)=&\;x_{1}(k)+x_{2}(k)\cos[x_{4}(k)]\Delta t, \\
x_{2}(k+1)=&\;x_{2}(k)+\frac{1}{m}\left[u_{1}(k)-\frac{1}{2}C_{d}\rho Ax_{2}(k)^{2}-C_{r}mg\right]\Delta t, \\
x_{3}(k+1)=&\;x_{3}(k)+x_{2}(k)\sin[x_{4}(k)]\Delta t, \\
x_{4}(k+1)=&\;x_{4}(k)+\left\{\frac{1}{L}x_{2}(k)\tan[u_{2}(k)]-\kappa[x_{1}(k)]x_{2}(k)\right\}\Delta t.
\end{aligned}
\right. 
\end{equation}
Our goal is to minimize the loss function 
\begin{equation}\label{autonomous vehicle loss}
\begin{aligned}
J=&\;\sum_{k=0}^{N}\left(||x_{3}(k)||^{2}+||x_{4}(k)||^{2}+\lambda_{1}||u_{1}(k)||^{2}+\lambda_{2}||u_{2}(k)||^{2}\right.\\
&\left.+\lambda_{3}||\Delta u_{1}(k)||^{2}+\lambda_{4}||\Delta u_{2}(k)||^{2}\right),
\end{aligned}
\end{equation}
with $0\leq\lambda_{1},\lambda_{2},\lambda_{3},\lambda_{4}\leq 1$ being the penalization weights.  

We perform the experiment with the following parameter values: $m=1500\;\text{kg}, L=2.5\;\text{m}, \Delta t=0.1, g=9.81\;\text{m/s}^{2},C_{d}=0.3, 1.225\;\text{kg/m}^{3}\;\text{density of air},A=2.5\;\text{m}^{2}\;\text{(frontal area)},C_{r}=0.01,\kappa=0,\lambda_{1}=\lambda_{2}=0.1, \lambda_{3}=\lambda_{4}=0.01$. 

\begin{figure}[H]
\centering
\includegraphics[width=1.2\linewidth]{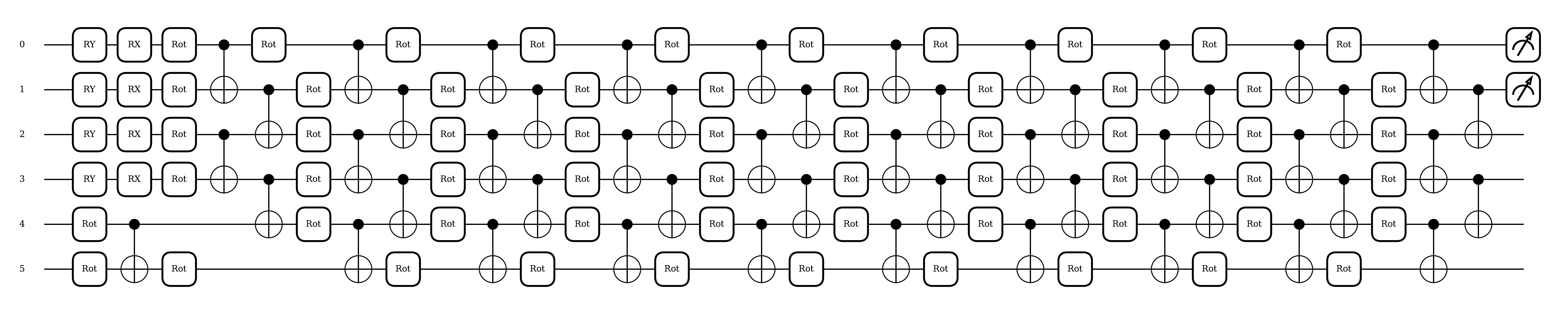}
\caption{Quantum circuit simulating the optimal choice of controls for the autonomous vehicle model.}
\label{fig15}
\end{figure}

\begin{figure}[H]
\centering
\includegraphics[width=1\linewidth]{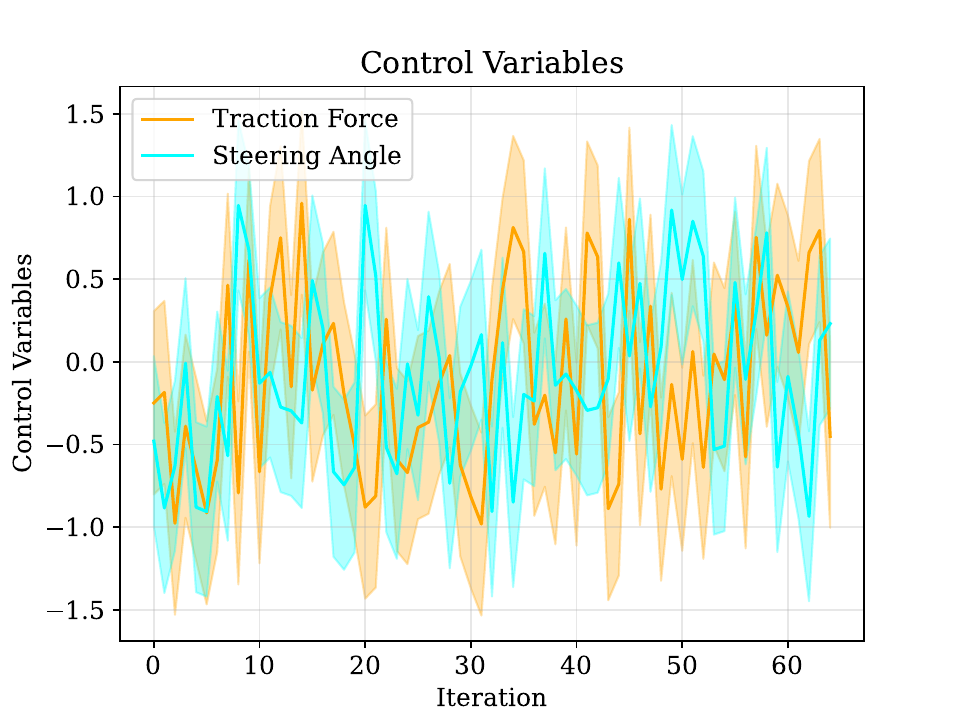}
\caption{Optimal control response variables for the autonomous vehicle model.}
\label{fig16}
\end{figure}

\begin{figure}[H]
\centering
\includegraphics[width=1\linewidth]{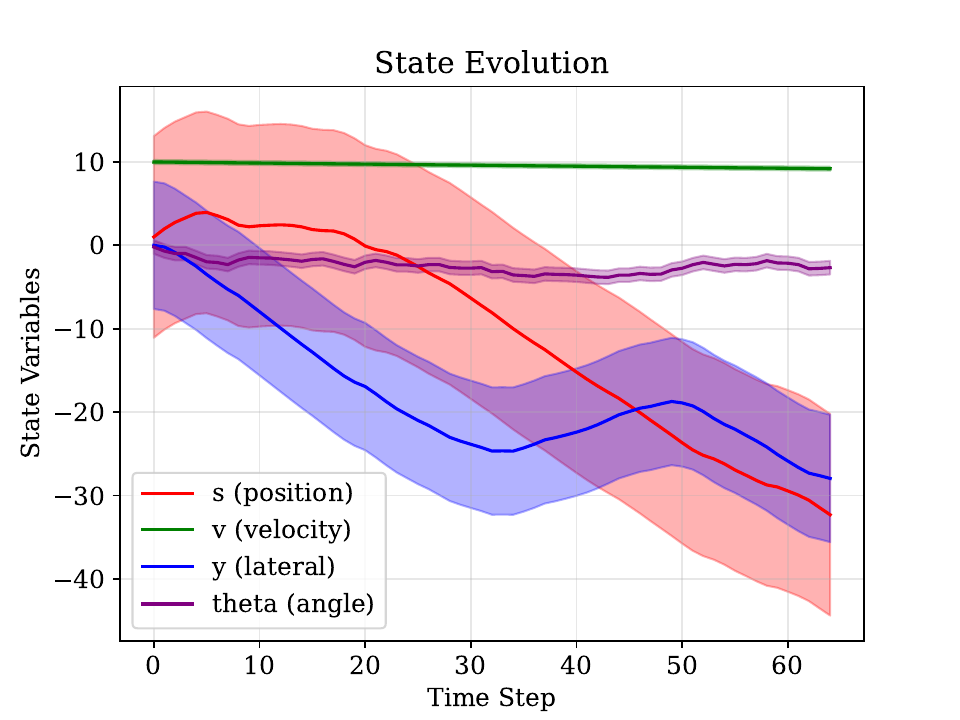}
\caption{Optimal trajectories for the state variables for the  autonomous vehicle model.}
\label{fig17}
\end{figure}

\begin{figure}[H]
\centering
\includegraphics[width=1\linewidth]{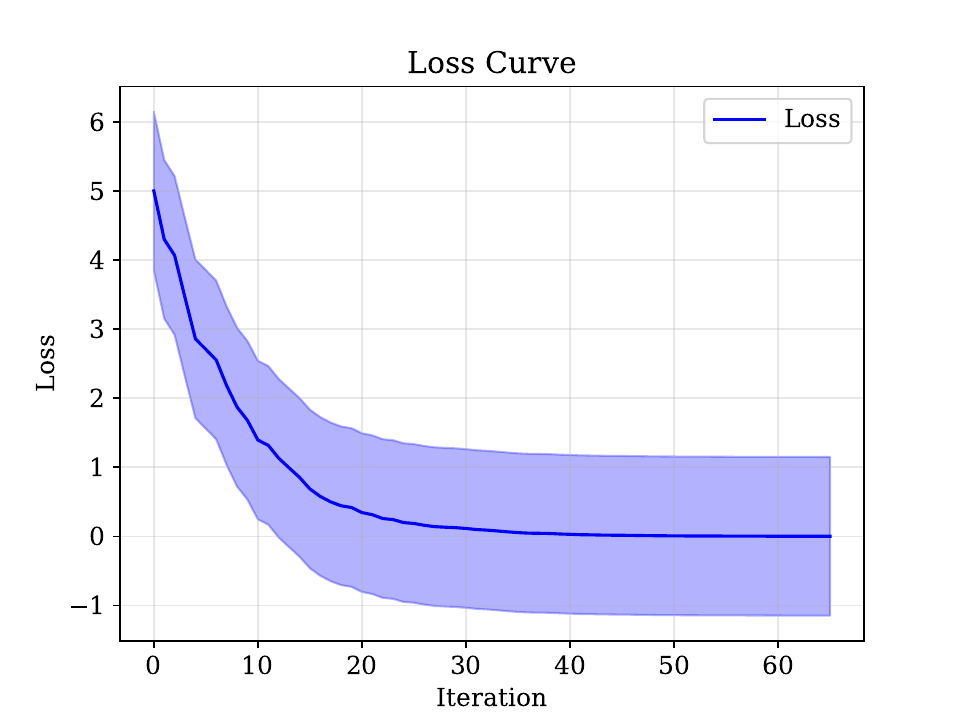}
\caption{Loss function for the  autonomous vehicle model.}
\label{fig18}
\end{figure}

As seen in Fig. \ref{fig15}, we used 6 qubits to determine the optimal controls, and the setup is analogous to previous experiments in how encoding, parametrized learning, entanglement, and measurement. 

In Fig. \ref{fig16}, we see the control variables have very erratic behavior. Both the traction force (oscillating in $\left[-1.5,1.5\right]$) and steering angle (oscillating in $\left[-1,1\right]$) exhibit significant oscillations throughout the 60 iterations. This suggests that the optimization algorithm is continuously adjusting these control variables, exploring the complex cost landscape, and reacting to uncertainties in the system. The system does not appear to settle into a steady state within the observed iterations. The shaded region around the traction force indicates a considerable level of uncertainty associated with this control variable, especially in the earlier iterations, and thereafter a slight dampening occurs. Similar to the traction force, the steering angle exhibits substantial uncertainty, particularly in the initial phase. The oscillations seem to persist throughout the iterations, suggesting that finding a stable steering strategy is challenging.

In Fig. \ref{fig17}, we observe that the position of the autonomous vehicle starts near $0$ and generally decreases over time, reaching approximately $-30$ by the end of the simulation. Further, there is significant uncertainty in the position, as indicated by the wide red-shaded variance region around the $s$ trajectory, especially in the middle portion of the simulation. This suggests that the prediction of the position is subject to considerable variability, possibly due to uncertainties in the system dynamics and control. The velocity starts near $10$ and remains relatively constant with a slight downward trend towards the end, staying around $8$. The uncertainty in velocity (green shaded region) is relatively small compared to position, indicating a more predictable behavior for this state variable. The almost constant positive velocity suggests a general movement in one direction. The lateral displacement starts near 0, decreases significantly to around $-30$ by iteration $30$, and then starts to increase again, reaching around $-20$ by the end. The uncertainty in lateral displacement (blue-shaded region) is also substantial, suggesting challenges in accurately predicting or controlling the lateral movement. The trajectory indicates an initial drift to one side followed by a partial correction. The angle starts near $0$, decreases slightly to around $-5$, and then fluctuates within a relatively narrow range between $-5$ and $-2$. The uncertainty in the angle (purple-shaded region) is relatively small after the initial time steps, indicating that the angular orientation is being controlled with reasonable precision.

The loss curve in Fig. \ref{fig18} starts at a relatively high value (around $6$) and decreases sharply during the first $15$ iterations. This indicates that the algorithm makes significant progress in finding better parameter values for the controls, leading to a substantial improvement in the controller's performance. The shaded blue area, representing the variance in the loss, is also relatively large in the initial iterations but decreases significantly as training progresses. This suggests that the optimization process becomes more stable and converges towards a consistent region of the parameter space. We attribute the initial high variance to the stochastic nature of quantum measurements in the algorithm's exploration phase.

\subsection{Experiment 4: The Simple Pendulum}\label{experiment2}
The classical pendulum, as seen in Fig. \ref{fig5}, consists of a mass $m$ fixed at one end, tied to an inextensible string of length $\ell$, making an angle of $\Theta$ with the vertical. It can easily be verified that the dynamics of this system is governed by
\begin{equation}\label{pendulum eqn}
\ddot{\Theta}=-\frac{g}{\ell}\sin\Theta+\frac{u}{m\ell^{2}},
\end{equation}
where $\ddot{\Theta}$ is the angular acceleration of the pendulum, $g\approx 9.81\;\text{m/s}^{2}$ is the acceleration due to gravity, and $u$ is the torque which is taken to be the control variable. Using Euler's method, we can easily discretize \eqref{pendulum eqn} into the form
\begin{equation}\label{discrete simple pendulum equation}
\left\{
\begin{aligned}
\Theta_{t+1}=&\;\Theta_{t}+\dot{\Theta}_{t}\Delta t, \\
\dot{\Theta}_{t+1}=&\;\dot{\Theta}_{t}+\left(-\frac{g}{\ell}\sin\Theta_{t}+\frac{u_{t}}{m\ell^{2}}\right)\Delta t,
\end{aligned}
\right.
\end{equation}
where $\dot{\Theta}$ denotes the angular velocity, and $\Delta t$ is the discrete time step.

\begin{figure}[H]
\centering
\includegraphics[width=1\linewidth]{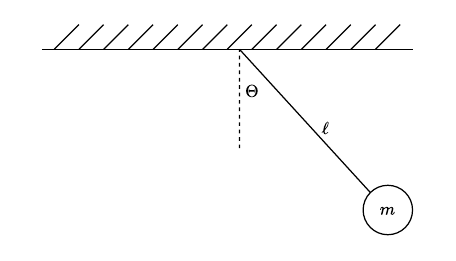}
\caption{The classical pendulum problem, fixed at one end and oscillating in a circular path.}
\label{fig5}
\end{figure}

For simplicity in training the model, we set $\ell=1, m=1, g=9.81$ and trained the model over 50 iterations. The other training parameters chosen are $\lambda=0.05,u_{\min}=-2,u_{\max}=2$, and we use a dynamical learning rate so as to ensure that the oscillatory behavior of the training loss is minimized. Specifically, in order to update the gradients, we have used SGD with momentum and chose an initial learning rate of 0.3, a minimum learning rate of 0.01, and a decay rate of 0.95. We set the momentum to be 0.85, and we clipped the gradient between $\left[-0.5,0.5\right]$. The results of the training are depicted in Figs. \ref{fig6}-\ref{fig9}. 

\begin{figure}[H]
\centering
\includegraphics[width=1\linewidth]{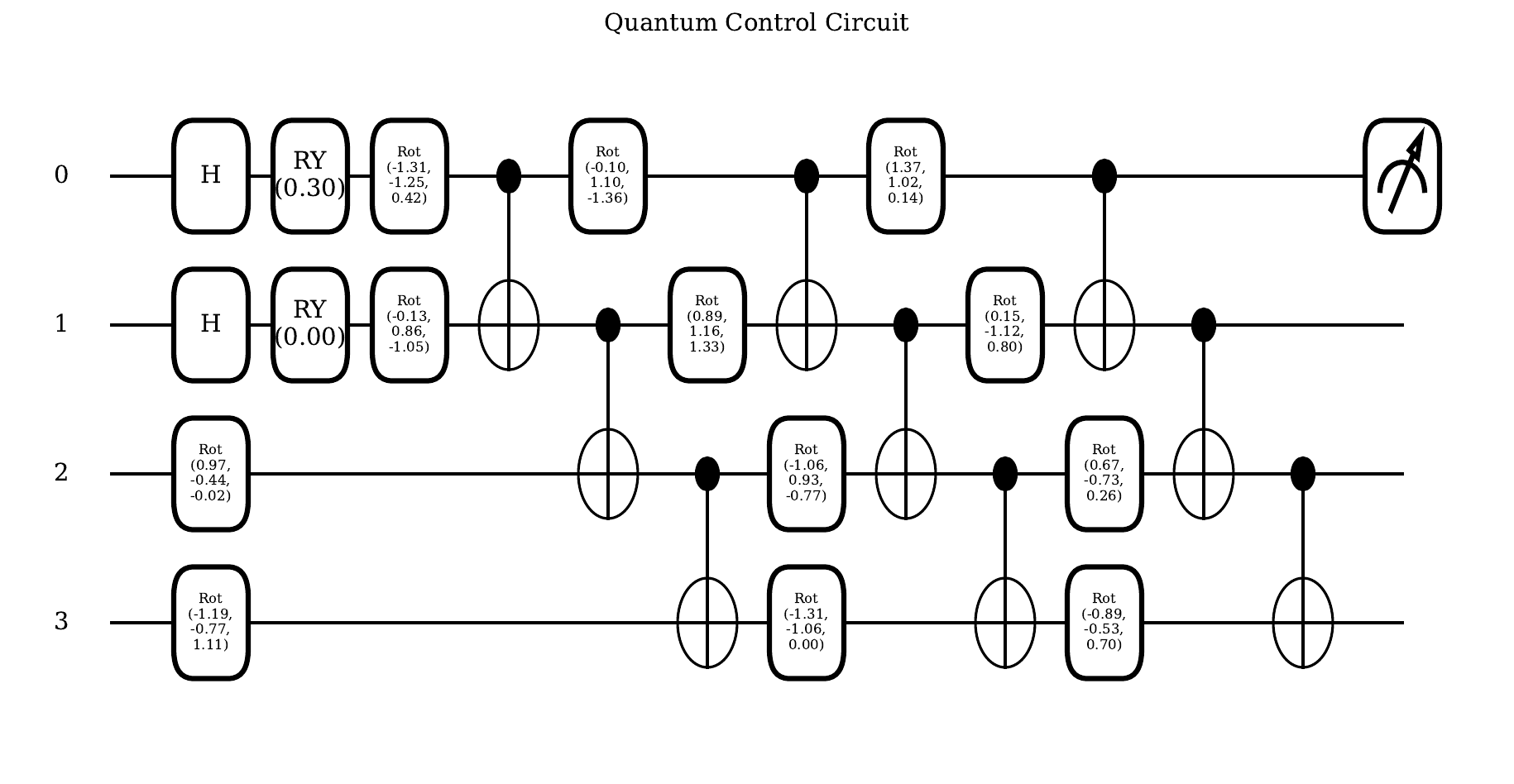}
\caption{Quantum circuit simulating the optimal choice of controls for the simple pendulum system.}
\label{fig6}
\end{figure}

\begin{figure}[H]
\centering
\includegraphics[width=1\linewidth]{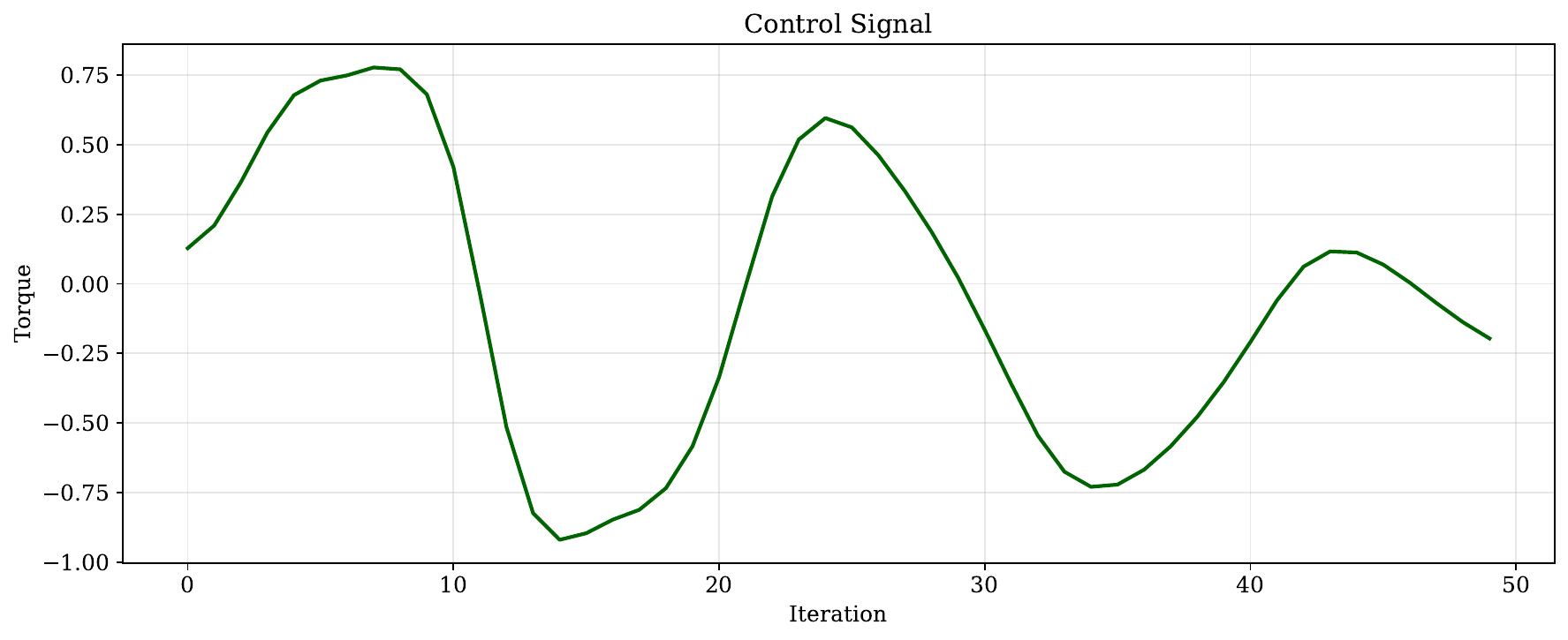}
\caption{Optimal control response variable (in this case, the toque) for the simple pendulum system.}
\label{fig7}
\end{figure}

\begin{figure}[H]
\centering
\includegraphics[width=1\linewidth]{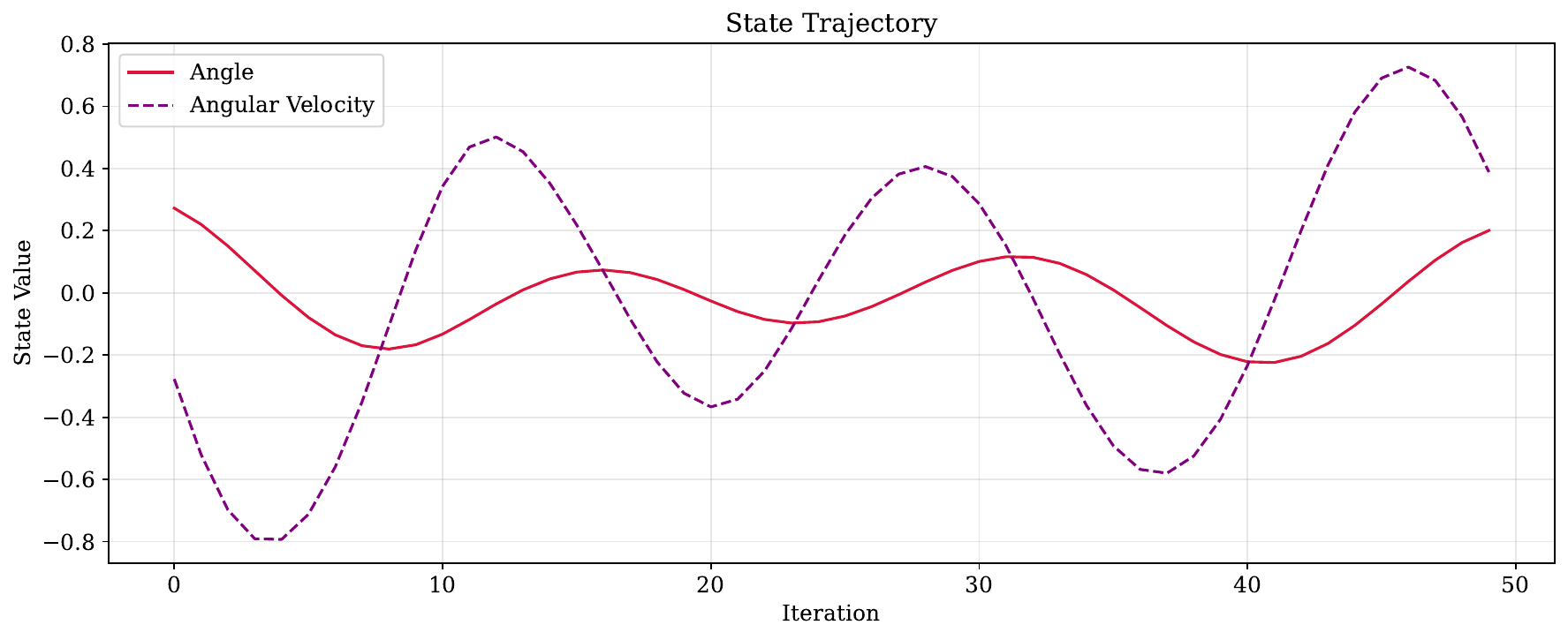}
\caption{Optimal trajectories for the state variables $\Theta$ and $\dot{\Theta}$ for the simple pendulum system.}
\label{fig8}
\end{figure}

\begin{figure}[H]
\centering
\includegraphics[width=1\linewidth]{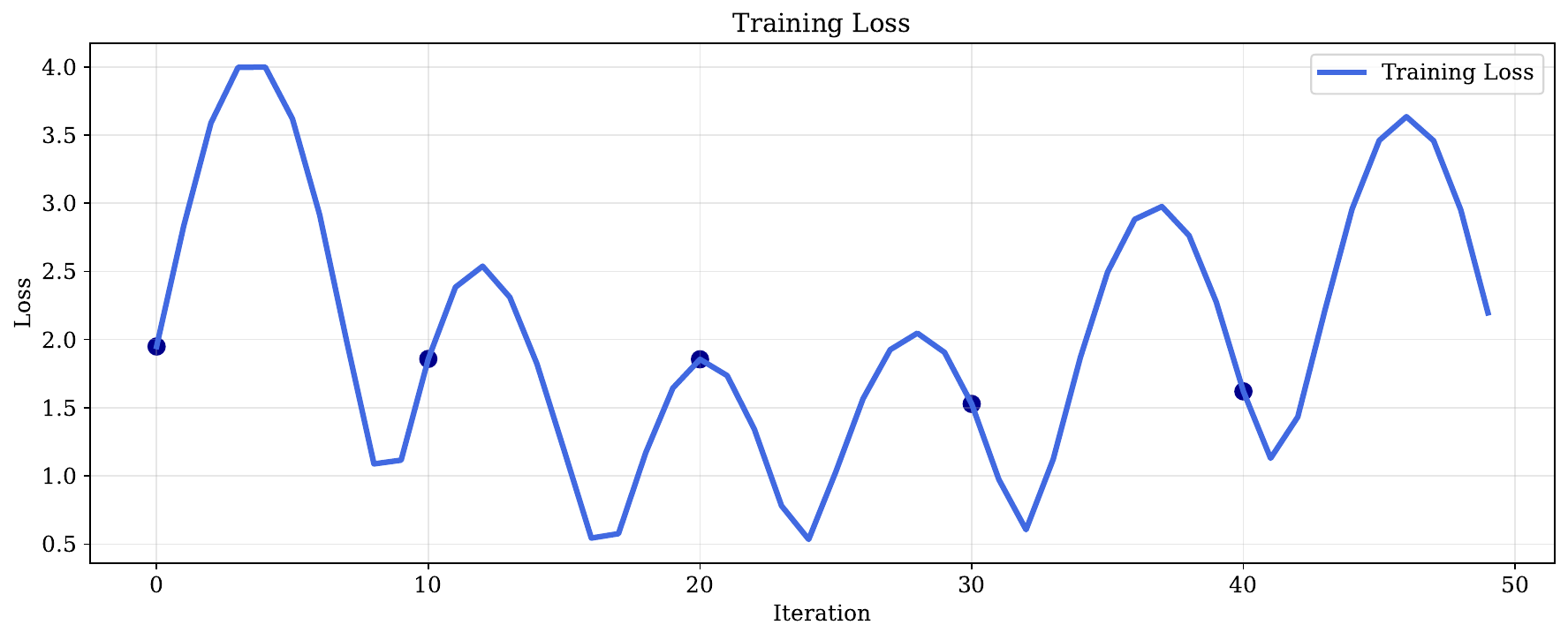}
\caption{Loss function for the simple pendulum system, with the loss being tracked every 10 iterations.}
\label{fig9}
\end{figure}

In Fig. \ref{fig6}, the classical state variables in \eqref{discrete simple pendulum equation} $\left(\Theta,\dot{\Theta}\right)$ are mapped into quantum states according to
\begin{equation}
\mathbf{x}_{0}=\left(\Theta,\dot{\Theta}\right)\to H\otimes H\ket{00},
\end{equation}
where $H$ is the Hadarmard gate used to create superposition. Simultaneously, rotation gates are applied to encode the classical states into quantum amplitudes according to
\begin{equation}
\ket{\psi(\Theta,\dot{\Theta})}=RY(\Theta)\otimes RY(\dot{\Theta})\ket{00}. 
\end{equation}
Similar to Sec. \ref{experiment2}, the rotation gates are applied in order to encode the control policy in a quantum variational form and discover the trainable parameters by undergoing single-qubit rotations. Thereafter, entanglement is introduced between qubits using CNOT gates, and finally, measurement is performed in the $Z$-basis in order to obtain the control torque. 

In Fig. \ref{fig7}, we see the representation of the optimal control signal evolution, with the control variable being torque, over iterations. In an ideal MPC setup, the control signal should stabilize over time as the system reaches equilibrium. Here, the control signal exhibits persistent oscillations, which is indicative that the quantum circuit parameters used, such as choice of gates and circuit depth, may be suboptimal, and there is a high variance in the expected values extracted from the circuit. We attribute this to the oversensitivity of the quantum controller to small changes in the state, which leads to the oscillatory behavior -- analogous to the exploding gradients problem in ML, where the updates to the learning algorithm are too aggressive.  

In Fig. \ref{fig8}, we observe the trajectories of the simple pendulum system with the solid line (red) representing the position of the pendulum $\Theta$ and the dashed line (purple) representing the angular velocity $\dot{\Theta}$. In an ideal control scenario, the pendulum should stabilize near the target state $\Theta=0,\dot{\Theta}=0$ -- that is when it settles in an upright or downward position with minimal oscillations. However, the presence of persistent oscillations in the diagram suggests potential instabilities in the control policy, and there is no effective damping of the motion. We account for this, as with the discussion of Fig. \ref{fig7}, by attributing this behavior to The control actions generated by the quantum circuit having high variance, leading to overcorrections and delayed adjustments.

In Fig. \ref{fig9}, we observe that the loss function fluctuates significantly, showing a non-monotonic trend. Further, there are multiple peaks and valleys, indicating that the training process encounters many local minima. The loss does not consistently decrease, which suggests possible issues with gradient updates and optimization dynamics. The initial loss at step 0 was 1.9489, and after 50 iterations, it decreased to 1.6196. While technically, the loss function did decrease, it was not a consistent decrease. 

\subsection{Experiment 5: The Compound Pendulum}\label{experiment3}
We consider the double pendulum system depicted in Fig. \ref{fig10}, with mass $m_{1}$ tied to a string of length $\ell_{1}$ which is tied to a pivot point, making an angle $\Theta$ with the vertical. Another mass $m_{2}$ is tied to $m_{1}$ via a string of length $\ell_{2}$ and makes an angle $\varphi$ with the vertical.

\begin{figure}[H]
\centering
\includegraphics[width=1.0\linewidth]{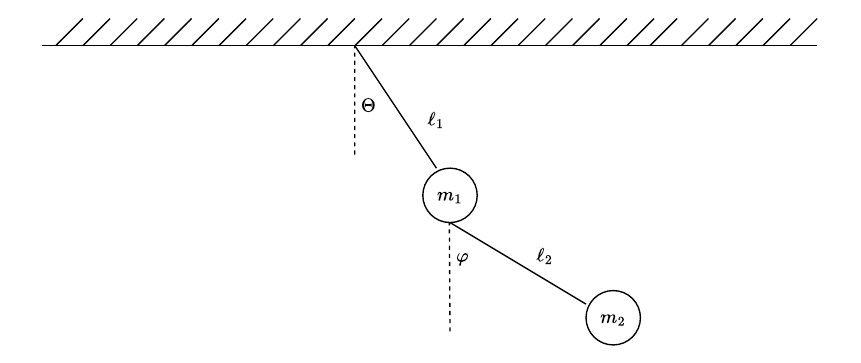}
\caption{The compound pendulum consists of two differing masses tied to strings of differing lengths, subtending angles of different sizes.}
\label{fig10}
\end{figure}

It can be shown using Lagrangian mechanics that the second-order nonlinear coupled ODE system gives the equations of motion that determine the system, ignoring frictional effects, is
\begin{equation}\label{double pendulum equations}
\left\{
\begin{aligned}
&\left(m_{1}+m_{2}\right)\ell_{1}^{2}\ddot{\Theta}+m_{2}\ell_{1}\ell_{2}\cos(\Theta-\varphi)+m_{2}\ell_{1}\ell_{2}\sin(\Theta-\varphi) \\
&+\left(m_{1}+m_{2}\right)g\ell_{1}\sin\Theta=\tau_{\Theta}, \\
&m_{2}\ell_{2}^{2}\ddot{\varphi}+m_{2}\ell_{1}\ell_{2}\ddot{\Theta}\cos(\Theta-\varphi)-m_{2}\ell_{1}\ell_{2}\dot{\Theta}^{2}\sin(\Theta-\varphi)+m_{2}g\ell_{2}\sin\varphi=\tau_{\varphi},
\end{aligned}
\right.
\end{equation}
where $\tau_{\Theta}$ and $\tau_{\varphi}$ are the torques applied at the pivot points where the strings are tied, and serve as controls. It can easily be verified that the Euler discretization of the system \eqref{double pendulum equations} is
\begin{equation}
\mathbf{x}_{k+1}=f(\mathbf{x}_{k},\mathbf{u}_{k})=
\begin{pmatrix}
\Theta_{k}+\dot{\Theta}_{k}\Delta t \\
\dot{\Theta}_{k}+\ddot{\Theta}_{k}\Delta t \\
\varphi_{k}+\dot{\varphi}_{k}\Delta t \\
\dot{\varphi}_{k}+\ddot{\varphi}_{k}\Delta t
\end{pmatrix},\quad 
\begin{pmatrix}
\ddot{\Theta}_{k} \\
\ddot{\varphi}_{k}
\end{pmatrix}=
\mathbf{M}(\mathbf{x}_{k})^{-1}\cdot\left[\boldsymbol{\tau}_{k}-\mathbf{C}(\mathbf{x}_{k})\right],
\end{equation}
with 
\begin{equation}
\begin{aligned}
\mathbf{M}(\mathbf{x})=&\;
\begin{pmatrix}
\left(m_{1}+m_{2}\right)\ell_{1}^{2} &m_{2}\ell_{1}\ell_{2}\cos(\Theta-\varphi) \\
m_{2}\ell_{1}\ell_{2}\cos(\Theta-\varphi) &m_{2}\ell_{2}^{2}
\end{pmatrix}, \\
\mathbf{C}(\mathbf{x})=&\;
\begin{pmatrix}
m_{2}\ell_{1}\ell_{2}\sin(\Theta-\varphi)\dot{\varphi}^{2}+\left(m_{1}+m_{2}\right)g\ell_{1}\sin\Theta \\
-m_{2}\ell_{1}\ell_{2}\sin(\Theta-\varphi)\dot{\Theta}^{2}+m_{2}g\ell_{2}\sin\varphi
\end{pmatrix}, \\
\boldsymbol{\tau}=&\;
\begin{pmatrix}
\tau_{\Theta} \\
\tau_{\varphi}
\end{pmatrix}, 
\end{aligned}
\end{equation}
where $\mathbf{M}$ is the mass matrix that defines the relationship between the system's accelerations and forces., $\mathbf{C}$ is the Coriolis vector that accounts for the apparent deflection of the system when viewed from a rotating frame as a result of their motion and the gravitational force, and $\boldsymbol{\tau}$ is the torque/control vector.  

We minimize the loss function 
\begin{equation}\label{double pendulum loss}
J=\sum_{k=0}^{N-1}\left(\lambda_{1}||\mathbf{x}_{k}-\mathbf{x}_{\text{target}}||^{2}+\lambda_{2}||\mathbf{u}_{k}||^{2}+\lambda_{3}||\boldsymbol{\theta}_{k}||^{2}\right),
\end{equation}
where $0\leq\lambda_{1},\lambda_{2},\lambda_{3}\leq 1$ are the penalization weights, as in the other experiments. Specifically, $\lambda_{1}$ controls the state error term and guides the quantum optimizer to minimize the difference between the current and target states over iterations. The $\lambda_{2}$ term encourages energy-efficient and physically realistic control actions to avoid aggressive and unstable behavior. The $\lambda_{3}$ term regularizes the VQC to prevent excessively large parameter updates and assists in maintaining stability and generalization in the quantum control policy by avoiding overfitting to noise. The ``new'' term is the loss due to the quantum network by summing up the learnable parameter values. We chose this form of the loss function because, upon experimentation, we discovered that the loss curve was repeatedly giving an increasing profile, both monotonic and erratic, and we were able to make the loss function decrease, albeit slightly, by choosing the loss to take on this form.  

We simulate the system in \eqref{double pendulum equations} using the parameter values: $m_{1}=m_{2}=1\;\text{kg}, \ell_{1}=\ell_{2}=1\;\text{m}, g=9.81\;\text{m/s}^{2},\Delta t=0.05\;\text{s},\lambda_{1}=1, \lambda_{2}=0.1,\lambda_{3}=0.01$. Our initial states are $\mathbf{x}_{0}=\begin{pmatrix}\Theta &\dot{\Theta} &\varphi &\dot{\varphi}\end{pmatrix}^{T}=\begin{pmatrix}0.1 &0 &0.1 &0 \end{pmatrix}^{T}$ and our target states are $\mathbf{x}_{\text{target}}=\begin{pmatrix}0.79 &0 &0.52 &0 \end{pmatrix}^{T}$. The results are summarized in Figs.  

\begin{figure}[H]
\centering
\includegraphics[width=1.0\linewidth]{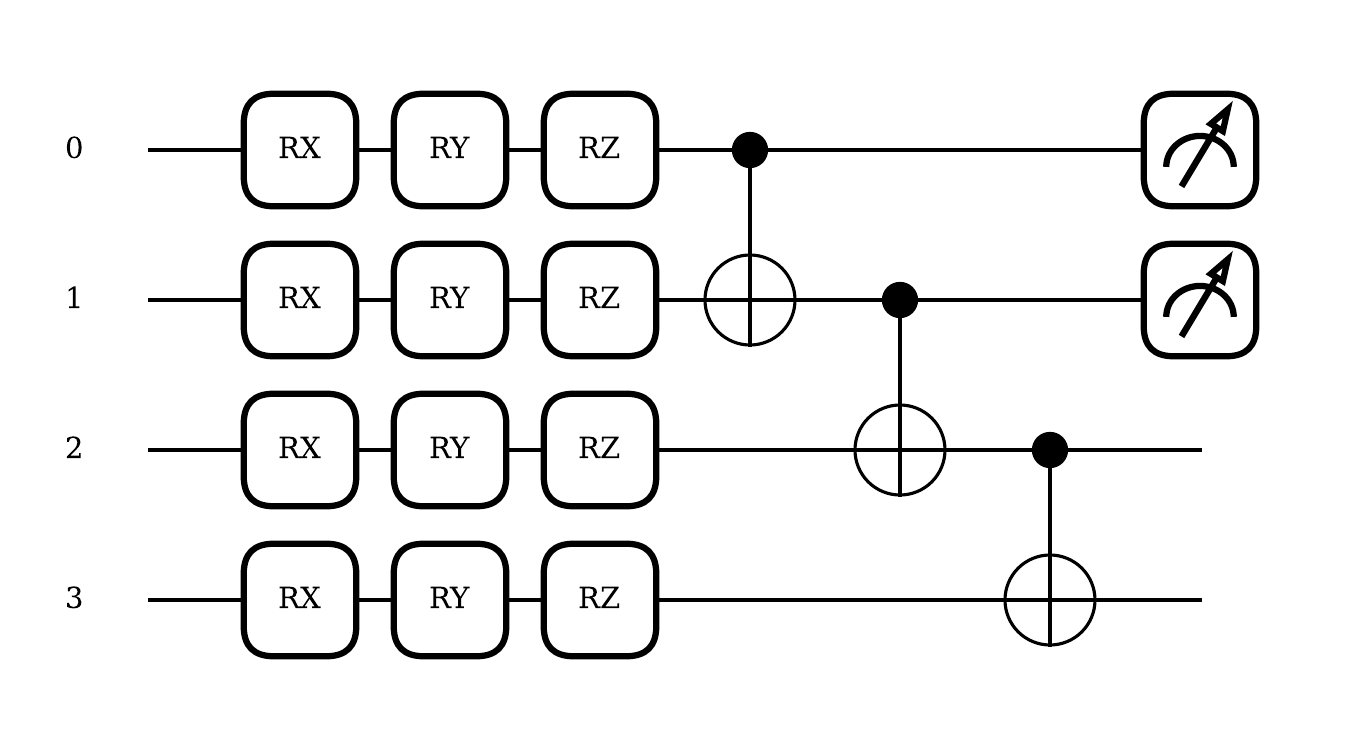}
\caption{Quantum circuit simulating the optimal choice of controls for the double pendulum.}
\label{fig19}
\end{figure}

\begin{figure}[H]
\centering
\includegraphics[width=1\linewidth]{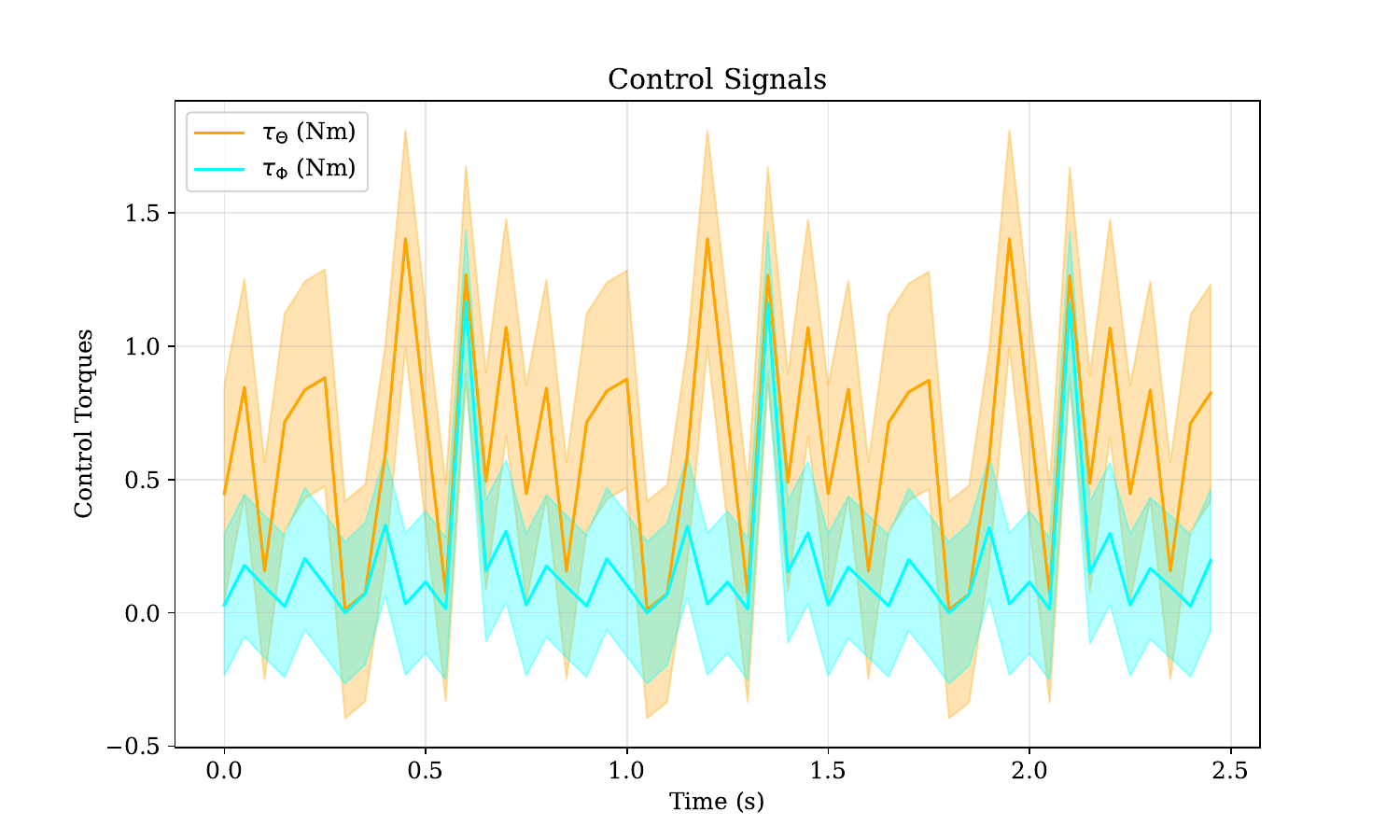}
\caption{Optimal control response variables for the double pendulum.}
\label{fig20}
\end{figure}

\begin{figure}[H]
\centering
\includegraphics[width=1\linewidth]{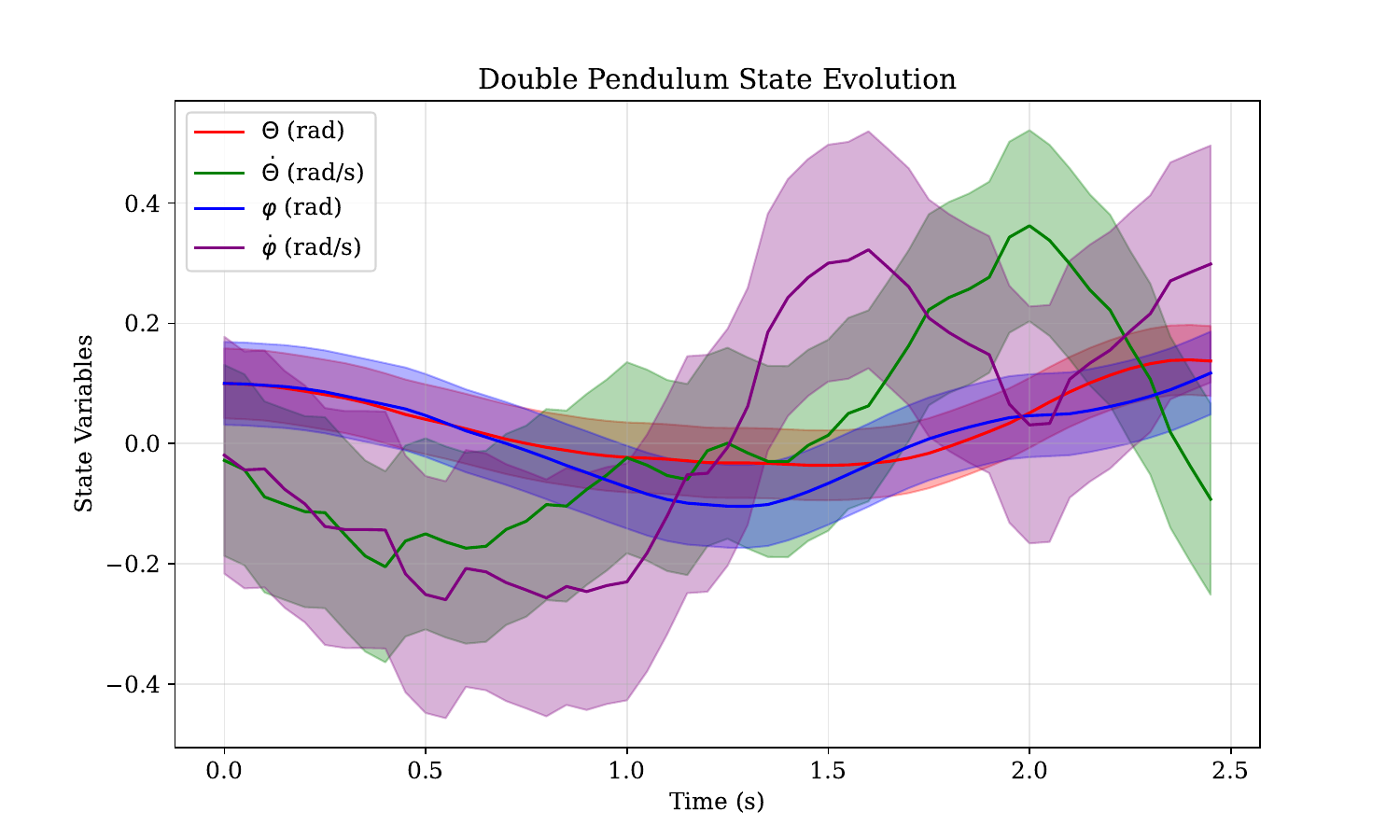}
\caption{Optimal trajectories for the state variables for the double pendulum.}
\label{fig21}
\end{figure}

\begin{figure}[H]
\centering
\includegraphics[width=1\linewidth]{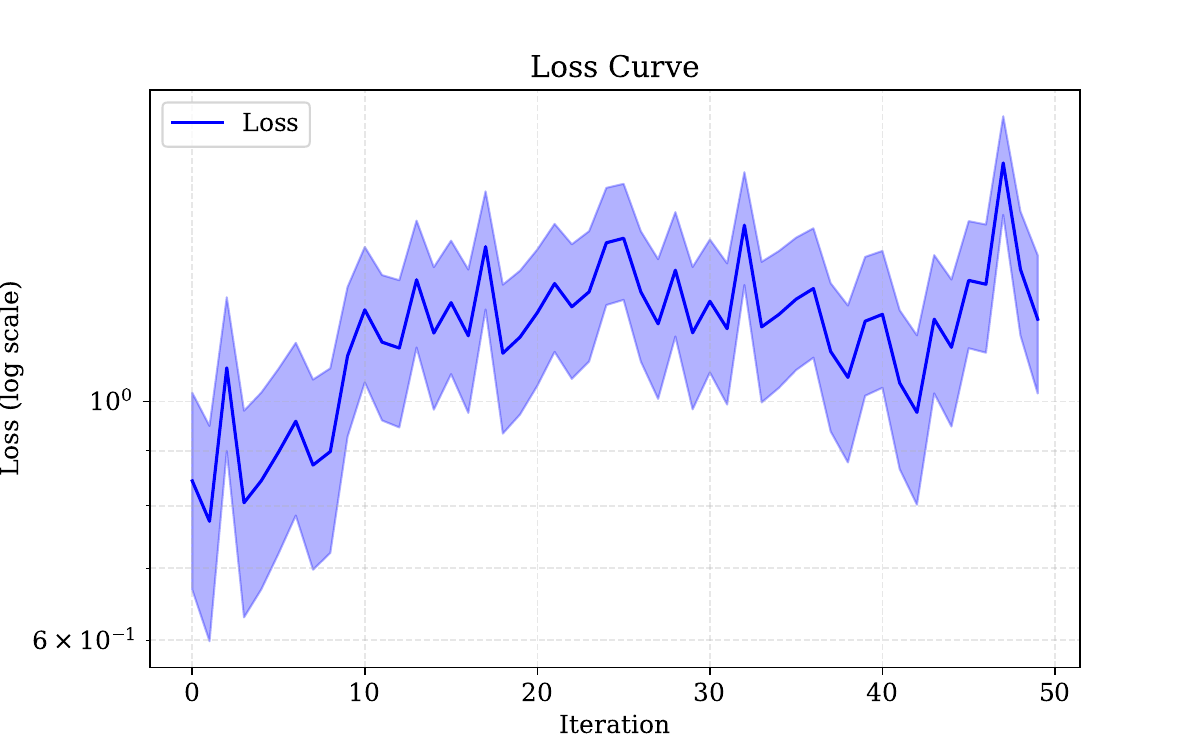}
\caption{Loss function for the double pendulum.}
\label{fig22}
\end{figure}

In Fig. \ref{fig19}, we observe that the quantum circuit consists of 4 qubits with the rotation gates providing the ansatz. As in previous experiments, the CNOT gates provide entanglement, and measurement is performed In the computational basis on the first two qubits. 

In Fig. \ref{fig20}, we see that both controls, $\tau_{\Theta}$ and $\tau_{\varphi}$, exhibit oscillatory behavior indicating that the controller is actively driving the system with periodic torque variations. The rapid resonance effects imply that the system is unstable and consumes excessive energy. In addition, the phase difference between the curves confirms that, indeed, the angles in the system are coupled. Lastly, there is a wide variance in both trajectories, as evidenced by the thick variance shading. 

In Fig. \ref{fig21}, we observe that for all four state variables, there are nonlinear trajectories, which is expected from a double pendulum system, which is known for chaotic dynamics. Further, we see that all four of these variables are highly correlated due to the phase differences, and it seems as though $\left(\dot{\Theta},\dot{\varphi}\right)$ leads $\left(\Theta,\dot{\varphi}\right)$ in phase. Further, the angular velocities $\left(\dot{\Theta},\dot{\varphi}\right)$ exhibit more rapid and irregular fluctuations, indicating that the controller is actively adjusting the system's motion to compensate for disturbances. 

In Fig. \ref{fig22}, we plot the loss function on a logarithmic scale, making it easier to observe trends and convergence, or lack thereof, behavior. The most striking feature of the plot is the lack of a clear downward, monotonically decreasing, trend. While there are fluctuations and some temporary dips, the overall loss does not consistently decrease. Moreover, it increases towards the end of the 50 iterations. This suggests that the optimization process is not converging to a minimum, and this QI-MPC approach struggles to find good control parameters. Thus, we conclude that for this compound pendulum system, this control strategy approach is suboptimal. We can attribute this behavior to several factors: The ineffectiveness of the QI-MPC approach for these types of problems and poor choice of the loss function \eqref{double pendulum loss}, among others. Due to the significant variance in the loss function throughout the training process, we hypothesize that the QI-MPC approach struggles with handling control tasks for oscillating systems. In Sec. \ref{physical results}, we investigate this hypothesis more deeply.   


\section{Physical Results}\label{physical results}
In this section, we present two important results that come out of observations from our experiments in Sec. \ref{experiments}. The first involves optimal control strategies, or the lack thereof, related to oscillatory systems, and the second is related to non-oscillatory systems; whether the dynamical model of the system is linear or nonlinear, given enough qubits, the hybrid QI-MPC would be the best model control strategy, and we identify applications where this result may and may not apply.  

\subsection{Nonfeasibility of hybrid QI-MPC Approaches to Control Oscillatory Systems}
We have observed in experiments 4 (Sec. \ref{experiment2}) and 5 (Sec. \ref{experiment3}), and to a certain degree in experiment 3 (Sec. \ref{autonomous vehicle experiment}) that the control variables exhibited highly erratic behavior, and in the former cases even the loss functions demonstrated such behavior. This is because these are oscillatory systems, and we hypothesize that a hybrid QI-MPC method is not the best control strategy for such systems. We formalize this in the Proposition \ref{prop1}. However, we have to demonstrate some preliminary results before establishing proof. 

\begin{theorem}\label{nyquist-shannon}
(Nyquist-Shannon Sampling Theorem). A band-limited signal, with the highest frequency component $f_{\max}$, can be perfectly reconstructed from its samples if it is sampled at a rate at least twice the highest frequency present in the signal. Mathematically, for a sampling rate frequency $f_{s}$,
\begin{equation*}
f_{s}\geq 2f_{\max}.
\end{equation*}
\end{theorem}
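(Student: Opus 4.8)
The plan is to prove this classical result by a frequency-domain argument. First I would fix notation: let $x(t)$ be the band-limited signal with Fourier transform $X(f)=\int_{-\infty}^{\infty}x(t)e^{-2\pi\imath f t}\,dt$, so that by hypothesis $X(f)=0$ whenever $|f|>f_{\max}$, and let the samples be the values $x(nT_{s})$, $n\in\mathbb{Z}$, where $T_{s}=1/f_{s}$ is the sampling period. The goal is to produce a reconstruction map from the sequence $\{x(nT_{s})\}$ back to $x(t)$ and to pin down the exact condition on $f_{s}$ making it exact.

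The central step is to model ideal sampling as multiplication of $x(t)$ by the Dirac comb $\sum_{n\in\mathbb{Z}}\delta(t-nT_{s})$ and to pass to the frequency domain, where this becomes convolution with a Dirac comb of period $f_{s}$. This produces the periodized spectrum
\begin{equation*}
X_{s}(f)=f_{s}\sum_{k\in\mathbb{Z}}X(f-kf_{s}),
\end{equation*}
namely infinitely many shifted replicas of $X$ spaced $f_{s}$ apart. Equivalently, and more cleanly, I would invoke the Poisson summation formula: the right-hand side is the Fourier series of a period-$f_{s}$ function whose $n$-th coefficient is proportional to $x(nT_{s})$, so $X_{s}$ is completely determined by the samples.

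Next I would establish the no-aliasing condition. Since $\operatorname{supp}X\subseteq[-f_{\max},f_{\max}]$, the $k$-th replica $X(\,\cdot-kf_{s})$ is supported on $[kf_{s}-f_{\max},\,kf_{s}+f_{\max}]$, and two consecutive replicas are disjoint exactly when $f_{s}-f_{\max}\geq f_{\max}$, i.e.\ $f_{s}\geq 2f_{\max}$, which is precisely the hypothesis. Under this condition the $k=0$ replica is isolated from $X_{s}$ by multiplication with the ideal low-pass filter $H(f)=T_{s}\,\mathbbm{1}_{[-f_{s}/2,\,f_{s}/2]}(f)$, returning $X(f)$ exactly and hence $x(t)$ by Fourier inversion. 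Transforming $H$ back to the time domain yields the interpolation kernel, so I would conclude with the explicit reconstruction formula
\begin{equation*}
x(t)=\sum_{n\in\mathbb{Z}}x(nT_{s})\operatorname{sinc}\!\left(\frac{t-nT_{s}}{T_{s}}\right),
\end{equation*}
which makes perfect reconstruction manifest.

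The main obstacle is analytic rigor rather than the idea itself: the Dirac-comb manipulation and the interchange of summation and integration are only formal unless one works with tempered distributions or imposes enough decay (e.g.\ $x\in L^{2}$ with $X\in L^{1}$, which holds for band-limited $L^{2}$ signals). In a careful write-up I would therefore favor the Poisson-summation route, where the only analytic input is uniform convergence of the relevant series on the band so that all exchanges are legitimate and the equality holds pointwise; I would also note that the boundary case $f_{s}=2f_{\max}$ forces adjacent replicas to meet on a set of measure zero, which is harmless for $L^{2}$ spectra and is the reason some formulations insist on a strict inequality.
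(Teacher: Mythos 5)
The paper does not actually prove this statement: its ``proof'' consists of deferring to a standard signal-processing textbook. Your sketch is the canonical frequency-domain argument one would find in such a reference --- periodize the spectrum via the Dirac comb (equivalently, Poisson summation), observe that $\operatorname{supp}X\subseteq[-f_{\max},f_{\max}]$ makes the replicas disjoint precisely when $f_{s}\geq 2f_{\max}$, and recover $x$ by ideal low-pass filtering, which in the time domain is sinc interpolation --- and it is correct. Your two caveats are also the right ones to flag: the Dirac-comb manipulation needs to be interpreted distributionally or replaced by the Poisson-summation route with a convergence hypothesis, and the boundary case $f_{s}=2f_{\max}$ is only safe when the spectrum has no point mass at $\pm f_{\max}$ (the usual counterexample $\sin(2\pi f_{\max}t)$, sampled at its zero crossings, lies outside $L^{2}$ and so outside your stated hypotheses). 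Since the paper supplies no argument of its own, there is nothing to contrast with; your write-up would in fact be a strict improvement over the citation-only treatment, and nothing in it conflicts with how the theorem is used later (Lemma 1's sampling-rate argument only needs the inequality $f_{s}\geq 2f_{\max}$ as a necessary condition).
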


\begin{proof}
This is a standard result in Signal Processing, and proof can be found in any mathematically oriented textbook on the subject; see, for example, \cite{taub1991principles}.  
\end{proof}

\begin{lemma}\label{lemma1}
The feedback delay of hybrid QI-MPC methods exceeds the characteristic timescale of oscillatory dynamics.
\end{lemma}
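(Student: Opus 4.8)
The plan is to make the informal claim of Lemma \ref{lemma1} precise by introducing three timescales and comparing them. First I would define the \emph{feedback delay} $\tau_{\text{fb}}$ of a hybrid QI-MPC loop as the sum of the latencies incurred in one pass through the inner loop of Algorithm \ref{algo1}: the classical-to-quantum state encoding, the coherent evolution under $U(\boldsymbol{\theta})$, the repeated circuit executions and measurements needed to estimate the expectation values $\langle Z_i\rangle$ to a target precision $\epsilon$, the classical post-processing (clipping, dynamics propagation, loss evaluation), and the optimizer update via the parameter-shift rule. The key quantitative input is that, by Formal Verification III (Hoeffding's inequality), reducing the measurement error to $\epsilon$ requires $\mathcal{M} = \Omega(\epsilon^{-2}\log(1/\delta))$ readouts; moreover the parameter-shift gradient of a VQC with $P$ trainable parameters requires $2P$ such expectation-value estimates. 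Hence $\tau_{\text{fb}} \gtrsim 2P\,\mathcal{M}\,\tau_{\text{shot}} = \Omega\!\left(P\,\epsilon^{-2}\,\tau_{\text{shot}}\right)$, where $\tau_{\text{shot}}$ is the wall-clock cost of a single state-preparation-plus-measurement cycle.

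Next I would characterize the \emph{characteristic timescale of oscillatory dynamics}. For a system with an oscillatory mode of angular frequency $\omega$ (e.g.\ the linearized pendulum $\omega=\sqrt{g/\ell}$, or the coupled normal modes of the double pendulum), the natural control timescale is $\tau_{\text{osc}} = 2\pi/\omega_{\max}$, the period of the fastest mode that must be actuated. Invoking Theorem \ref{nyquist-shannon} (Nyquist--Shannon), a controller that hopes to shape or damp this mode must close its loop at a rate $f_{\text{ctrl}} = 1/\tau_{\text{fb}} \geq 2 f_{\max} = \omega_{\max}/\pi$; equivalently it needs
\begin{equation*}
\tau_{\text{fb}} \leq \frac{\pi}{\omega_{\max}} = \frac{1}{2}\tau_{\text{osc}}.
\end{equation*}
The lemma is then the assertion that this inequality is \emph{violated} for hybrid QI-MPC: combining the lower bound $\tau_{\text{fb}} = \Omega(P\,\epsilon^{-2}\,\tau_{\text{shot}})$ with the control-accuracy requirement that $\epsilon$ be small relative to $u_{\max}-u_{\min}$ (otherwise the clipped controls in step 9 carry noise comparable to their own range, and the gradient estimate is meaningless), one gets $\tau_{\text{fb}} \gg \tau_{\text{osc}}$ whenever $\omega_{\max}$ is of the order seen in the mechanical experiments and $\tau_{\text{shot}}$, $P$ are of realistic NISQ-era magnitude. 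I would close by noting that even in an idealized zero-shot-noise limit the $2P$ parameter-shift circuit executions per step already impose $\tau_{\text{fb}} \geq 2P\,\tau_{\text{shot}}$, so the conclusion is structural, not merely an artifact of finite sampling.

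The main obstacle is that the statement as phrased is not a theorem about a fixed numerical gap but a scaling/regime claim, so the proof must be honest about the quantifiers: it holds \emph{for oscillatory systems whose fastest mode period is short compared to the achievable loop latency}, and one should state the threshold explicitly, namely $\omega_{\max} \gtrsim \pi/\tau_{\text{fb}}$ with $\tau_{\text{fb}}$ bounded below as above, rather than claiming it for every oscillator. A secondary subtlety is justifying that the \emph{relevant} frequency is $\omega_{\max}$ of the open-loop (or weakly-controlled) plant and not some slower closed-loop target: here I would appeal to the receding-horizon structure of Algorithm \ref{algo1}, in which each step must re-estimate the state and re-optimize, so the controller is effectively sampling the true plant dynamics at rate $1/\tau_{\text{fb}}$ and therefore cannot resolve — let alone counteract — any mode faster than $1/(2\tau_{\text{fb}})$; aliasing of the unresolved oscillation back into the measured state is precisely what manifests as the erratic control signals and non-monotone loss curves documented in Figs.\ \ref{fig7}, \ref{fig9}, \ref{fig20}, and \ref{fig22}. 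I would keep the argument at the level of timescale bookkeeping plus one application of Nyquist--Shannon, deferring the translation of "erratic behavior" into a formal instability statement to the Proposition that builds on this lemma.
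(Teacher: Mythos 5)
Your argument reaches the same conclusion as the paper and pivots on the same external input (Theorem \ref{nyquist-shannon}), but the mechanism you identify for the latency is genuinely different --- and, frankly, more defensible. The paper's proof compares raw hardware clock rates: it notes that qubits operate at MHz--THz while classical optimizers run at GHz, and from this asserts $f_{\text{classical optimizer}} \lesssim f_{\text{control}}$, so that the classical half of the loop violates the Nyquist condition; the connection to the mechanical system's own frequency $\omega_{\text{sys}}$ is left implicit, and the inequality is stated rather than derived. You instead build the feedback delay out of the algorithm itself: $\mathcal{M} = \Omega(\epsilon^{-2})$ repeated readouts per expectation value (consistent with the paper's own Formal Verification III), times $2P$ circuit evaluations for the parameter-shift gradient, times the per-shot wall-clock cost, giving $\tau_{\text{fb}} = \Omega(P\,\epsilon^{-2}\,\tau_{\text{shot}})$, which you then test against the Nyquist bound $\pi/\omega_{\max}$. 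This buys you two things the paper's version lacks: the bottleneck is located in the sampling overhead of expectation-value estimation rather than in a clock-speed comparison (a GHz classical optimizer is, after all, vastly faster than a $\sim 1\,$Hz pendulum, so the paper's inequality as written does not obviously point in the claimed direction), and you are explicit that the lemma is a regime claim with a quantified threshold $\omega_{\max} \gtrsim \pi/\tau_{\text{fb}}$ rather than a universal statement about every oscillator. Keep in mind that neither version is a theorem in the strict mathematical sense --- both rest on order-of-magnitude estimates of $\tau_{\text{shot}}$, $P$, and $\epsilon$ --- but your bookkeeping makes the hidden assumptions visible where the paper's proof does not.
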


\begin{proof}
Let $\omega_{\text{sys}}$ be the natural (angular) frequency of the oscillatory system. By theorem \ref{nyquist-shannon}, the updates of the control must be $f_{\text{control}}\geq 2f_{\text{sys}}=2\omega_{\text{sys}}/2\pi=\omega_{\text{sys}}/\pi$. Now, the various types of qubits operating frequencies are $\sim$ GHz (superconducting, spin qubits in semiconductors, Nitrogen-Vacancy centers), $\sim$ THz (trapped ions, photonic, optical transitions), or $\sim$ MHz (NMR, Rydberg transitions), but classical optimization schemes, that run on CPUs and GPUs, have clock speeds that operate in $\sim$ GHz. Thus, $f_{\text{classical optimizer}}\lesssim f_{\text{control}}$, violating theorem \ref{nyquist-shannon}.      
\end{proof}

\begin{lemma}\label{lemma2}
The energy scale of a quantum control is insufficient to stabilize macroscopic oscillators.
\end{lemma}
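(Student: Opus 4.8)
The plan is to first give the informal phrase ``energy scale of a quantum control'' a precise meaning, and then run a two-pronged argument showing that in either interpretation the closed loop fails the hypothesis of Formal Verification II for a macroscopic oscillator, so that $\lim_{k\to\infty}\mathbf{x}_{k}=\mathbf{x}_{\text{target}}$ cannot hold. Concretely, I would define the native control scale to be $\varepsilon_{q}\overset{.}{=}\hbar\omega_{q}$, the energy quantum of the qubit register at its transition frequency $\omega_{q}$ (GHz for superconducting and spin qubits, up to optical frequencies for trapped ions, i.e. $\varepsilon_{q}\sim 10^{-24}$--$10^{-19}\,$J, consistent with the bands quoted in Lemma \ref{lemma1}), while noting that the raw output $u_{k}^{(i)}=\bra{\psi(\mathbf{x}_{k},\boldsymbol{\theta})}Z_{i}\ket{\psi(\mathbf{x}_{k},\boldsymbol{\theta})}\in[-1,1]$ is a pure number whose only intrinsic physical magnitude is set by $\varepsilon_{q}$ (or, at best, the gate/readout energy budget), not by the arbitrary clip window $[u_{\min},u_{\max}]$.

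First, I would fix the plant scale. For a laboratory macroscopic oscillator such as the simple pendulum \eqref{pendulum eqn}, the gravitational torque the control must overcome away from the natural equilibrium is of order $mg\ell$, and the work the controller must do per oscillation period to redirect the kinetic--potential exchange (and to extract energy for damping) is of order $mg\ell\sim 1$--$10\,$J. A Landauer/thermodynamic bound on energy transfer limits what an $n$-qubit readout can deliver to any coupled load to $O(n\varepsilon_{q})$, and since $n\varepsilon_{q}\ll mg\ell$ for every feasible $n$, the quantum subsystem cannot itself actuate the oscillator: macroscopic actuation must be supplied by a classical power source that the VQC output merely \emph{modulates} through a gain $\gamma\overset{.}{=}mg\ell/\varepsilon_{q}\gtrsim 10^{20}$. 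Then I would track the readout noise: by Formal Verification III the raw signal carries fluctuation $\eta_{k}\sim\mathcal{N}(0,\sigma^{2})$ with $\sigma\sim\mathcal{M}^{-1/2}$, so after the gain the applied torque has noise of standard deviation $\gamma\sigma$; to reduce this to a macroscopically meaningful tolerance $\epsilon$ (a small fraction of $mg\ell$, but necessarily $\gg\varepsilon_{q}$) Hoeffding forces $\mathcal{M}\gtrsim\gamma^{2}\sigma^{2}/\epsilon^{2}$ shots, which is astronomically larger than what the per-cycle budget — already throttled by the Nyquist constraint of Lemma \ref{lemma1} — permits. Finally I would close the loop: substituting either the under-powered control (no gain, $|\tilde{\mathbf{u}}_{k}|=O(\varepsilon_{q})$) or the gain-amplified noisy control into the Lyapunov rate for $V=j$ (the positive-definite stage cost used in Formal Verification II), the destabilizing term $-(g/\ell)\sin\Theta$ (respectively the residual torque noise of order $mg\ell$) dominates, so $\dot V$ cannot be made negative definite in a neighbourhood of $\mathbf{x}_{\text{target}}$; the hypothesis of Formal Verification II is violated and the stabilization conclusion fails, matching the erratic controls seen in Figs. \ref{fig7} and \ref{fig20}.

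The main obstacle, I expect, is foundational rather than computational: ``the energy scale of a quantum control'' is not an object the paper has defined, and the entire argument stands or falls on choosing a definition that is (i) intrinsic to the VQC/measurement primitive rather than an artifact of the arbitrary clip window, and (ii) makes the comparison with $mg\ell$ a genuine obstruction rather than a unit convention. I would anchor it on $\hbar\omega_{q}$ together with the thermodynamic energy-transfer bound above, but to make the statement clean I will also need an operational definition of ``macroscopic'' — most naturally a threshold such as $mg\ell^{2}\gg\hbar$ (action large compared with $\hbar$) — so that the lemma correctly \emph{excludes} engineered micro- and nano-mechanical oscillators, for which the scales are comparable and the hybrid scheme need not fail. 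A secondary subtlety is keeping this argument logically compatible with Lemma \ref{lemma1}, so that the timing obstruction and the energy obstruction compound into a single failure mode rather than being presented as independent coincidences.
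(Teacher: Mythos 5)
Your first prong \emph{is} the paper's entire proof: the published argument consists solely of the order-of-magnitude comparison $E_{\text{mech}}\sim mgL\sim\mathcal{O}(10)\,$J against $E_{\text{qubit}}\sim\hbar f_{\text{qubit}}\sim\mathcal{O}(10^{-24})\,$J, the observation that the ratio is $\mathcal{O}(10^{-25})$, and the conclusion that the quantum control is therefore negligible. Everything else you propose goes beyond what the paper does, and mostly to its benefit. The gain-amplification argument (a classical actuator modulated by the VQC output with gain $\gamma\sim mg\ell/\varepsilon_{q}$, so that readout noise is amplified by $\gamma$ and Hoeffding forces an infeasible shot count) is the more interesting half of the story, because it closes the obvious loophole in the paper's version: nothing in Algorithm~1 requires the physical control to carry the qubit's energy, since $u_{k}$ is a dimensionless expectation value in $[-1,1]$ that is rescaled by whatever actuator one attaches. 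The paper's proof silently identifies ``the energy of the control'' with $\hbar f_{\text{qubit}}$ without justifying that identification, which is exactly the foundational gap you flag; your insistence on an intrinsic definition of the control's energy scale and an operational threshold for ``macroscopic'' (so that micro- and nano-mechanical oscillators are correctly excluded, as the paper itself later concedes when it recommends QI-MPC for nanoscale systems) addresses a real weakness rather than inventing one. Your Lyapunov closure via the hypotheses of Formal Verification~II is likewise absent from the paper, which never connects the energy mismatch back to a failure of the stated stability guarantee. Be aware, though, that at the level of rigor the paper operates at, your additions are not needed to ``match'' its proof --- the published argument is a two-line physical estimate, and a referee holding your proposal to a mathematical standard would have to hold the original to the same one.
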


\begin{proof}
Mechanical energy, given as the sum of kinetic and potential energies, for everyday objects roughly scales as $E_{\text{mech}}\sim mgL\sim\mathcal{O}(10)\;\text{J}$ when the kinetic energy is small, for length everyday length scales $L$. Quantum control operates at Planck scale energies $E_{\text{qubit}}\sim \hbar f_{\text{qubit}}\sim\mathcal{O}(10^{-24})\;\text{J}$, with $\hbar\approx6.626\times10^{-34}\;\text{J.s}$. The ratio of these energies $E_{\text{qubit}}/E_{\text{mech}}\sim\mathcal{O}(10^{-25})$ is very small and therefore has a negligible effect on the control. Thus, the quantum control is negligible, as it provides little to no effect on stabilizing an oscillating system.  
\end{proof}

\begin{definition}\label{definition1}
(Contractible Sets). A set $S\in\left(Y,\mathcal{T}\right)$ (where $Y$ is a topological space endowed with topology $\mathcal{T}$) is contractable if there exists a continuous map $H:S\times\left[0,1\right]\to S$ such that:
\begin{enumerate}
\item \textbf{Identity map at time 0:} $H(x,0)=x\;\forall x\in S$.
\item \textbf{Map collapse to single point:} $H(x,1)=p\;\forall x\in S$ and some point $p\in S$. 
\end{enumerate}
\end{definition}

\begin{definition}\label{definition2}
(Reachable Set). For a system with dynamics $\dot{\mathbf{x}}_{k}=f(\mathbf{x}_{k},\mathbf{u}_{k})$, the reachable set $\mathcal{R}$, from an initial set $X_{0}$, is 
\begin{equation*}
\mathcal{R}=\left\{\mathbf{x}(t)|\mathbf{x}(0)\in X_{0},\mathbf{u}\in\mathcal{U}\right\},
\end{equation*}
with states $\mathbf{x}$, controls $\mathbf{u}$, and set of all admissible controls $\mathcal{U}$. 
\end{definition}

\begin{definition}\label{definition3}
(Non-Contractable Reachable Set). A set is $\mathcal{R}^{*}$ non-contractable reachable if:
\begin{enumerate}
\item It does not satisfy the definition of contractability articulated in definition \ref{definition1}.
\item If it is reachable, as articulated in definition \ref{definition2}. 
\end{enumerate}
\end{definition}

\begin{remark}\label{remark1}
Intuitively, a non-contractable reachable set is a set of states that a system can reach that has a ``hole'' or a topological structure that prevents it from being continuously shrunk down to a single point within itself.    
\end{remark}

\begin{definition}\label{definition4}
(Dense Covering). Let $\left(Y,\mathcal{T}\right)$ be a topological space and $\mathcal{C}$ be a collection of subsets in $Y$. $\mathcal{C}$ is a dense covering of $\left(Y,\mathcal{T}\right)$ if it satisfies:
\begin{enumerate}
\item Every point in $Y$ belongs to at least one set in the collection $\mathcal{C}$,
\begin{equation*}
\bigcup_{c \in \mathcal{C}}c=Y.
\end{equation*}
\item Each non-empty open set in $Y$ intersects with at least one of the sets in $\mathcal{C}$.
\begin{equation*}
\forall U \subseteq Y, U \neq \emptyset, U\;\text{is open},\exists c \in \mathcal{C}\;\text{such that}\;c\cap U \neq \emptyset.    
\end{equation*}
\end{enumerate}
\end{definition}

\begin{remark}\label{remark2}
A dense covering is a collection of sets that covers the entire space, and at least one of these sets ``reaches into'' every non-empty open region of the space.    
\end{remark}

\begin{definition}\label{definition5}
($l$-Torus). Let $S^{1}$ be the unit circle in $\mathbb{R}^{2}$,
\begin{equation*}
S^{1}=\left\{\left(x,y\right)|x^{2}+y^{2}=1\right\}.
\end{equation*}
A torus is the product of two unit circles,
\begin{equation*}
\mathbb{T}^{2}=\left\{\left(\left(x_{1},y_{1}\right),\left(x_{2},y_{2}\right)\right)|\left(x_{1},y_{1}\right)\in S^{1},\left(x_{2},y_{2}\right)\in S^{1}\right\}.
\end{equation*}
This definition can easily be extended to the $l$-dimensional torus
\begin{equation*}
\begin{aligned}
\mathbb{T}^{l}=&\;\underbrace{S^{1}\times S^{1}\times\ldots\times S^{1}}_{l-\text{times}}\\
=&\;\left\{\left(\left(x_{1},y_{1}\right),\ldots,\left(x_{l},y_{l}\right)\right)|\left(x_{1},y_{1}\right)\in S^{1},\ldots,\left(x_{l},y_{l}\right)\in S^{1}\right\}. 
\end{aligned}
\end{equation*}
\end{definition}

\begin{remark}\label{remark3}
In our proof of lemma \ref{lemma3} that follows, we will use a result from Kolmogorov–Arnold–Moser (KAM) theory without establishing proof, i.e. we take this result to be true and leave it up to the reader to verify. 

``For integrable Hamiltonian systems with $l$ degrees of freedom, their phase space is foliated by $l$-dimensional tori. The motion of these tori are quasiperiodic and can be expressed as a sum of periodic functions with frequencies that are, in general, incommensurable, having irrational ratios.'' A corollary to this statement, which is of interest to us in proving lemma \ref{lemma3}, is that if a system is oscillatory, it has tori in its phase space. 
\end{remark}

\begin{theorem}\label{theorem2}
(Brouwer's Invariance of Domain). Let $\emptyset\neq U\subseteq\mathbb{R}^{l}$, with $U$ being open, and let $f:U\to\mathbb{R}^{s}$ be a continuous injective map. Then:
\begin{enumerate}
\item $\l\geq s$.
\item $f(U)\subset\mathbb{R}^{s}$ and $f(U)$ is open. 
\end{enumerate}
The proof is beyond the scope of this research.    
\end{theorem}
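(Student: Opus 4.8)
The plan is to prove the two assertions in the reverse of the order in which they are listed, since the openness of the image (part 2) carries the genuine analytic content and the dimensional inequality (part 1) then falls out of a dimension-counting argument. First I would reduce the global claim to a local one: because openness is a local property, it suffices to fix an arbitrary $p \in U$, choose $r > 0$ with the closed ball $\overline{B}_r(p) \subseteq U$, and show that $f(p)$ is an interior point of $f(U)$. On the compact set $\overline{B}_r(p)$ a continuous injection is automatically a homeomorphism onto its image (a continuous bijection from a compact space to a Hausdorff space is a homeomorphism), so locally $f$ is an embedding; this is the observation that makes the tools of algebraic topology applicable.

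Next I would establish part 2 by a local-homology (equivalently, Brouwer-degree) argument. Writing $S = \partial B_r(p)$ for the bounding sphere, injectivity forces $f(p) \notin f(S)$, so $f|_S$ omits the value $f(p)$. The crux is the computation of the local homology groups $H_\ast(\mathbb{R}^s, \mathbb{R}^s \setminus \{f(p)\})$: using excision together with the embedding property on $\overline{B}_r(p)$, one shows that $f(p)$ cannot lie on the topological boundary of $f(\overline{B}_r(p))$, for otherwise one could construct a retraction of a ball onto its bounding sphere, which is impossible. Equivalently, the Brouwer degree $\deg(f, B_r(p), y)$ is locally constant and nonzero on the component of $\mathbb{R}^s \setminus f(S)$ containing $f(p)$, so an entire neighbourhood of $f(p)$ is contained in $f(B_r(p)) \subseteq f(U)$. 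I expect this homological step to be the main obstacle, since it is precisely where the essential topological content of invariance of domain resides.

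Finally I would deduce part 1 from part 2 by comparing topological (covering) dimensions. By part 2 the set $f(U)$ is a nonempty open subset of $\mathbb{R}^s$, and therefore has covering dimension exactly $s$. On the other hand, since $U$ is an open subset of $\mathbb{R}^l$ it is $\sigma$-compact, so I may cover it by countably many compact balls $\overline{B}_i \subseteq U$; on each of these the restriction $f|_{\overline{B}_i}$ is an embedding, whence $f(\overline{B}_i)$ is homeomorphic to an $l$-dimensional ball and so has dimension $l$, and each $f(\overline{B}_i)$ is compact, hence closed in $\mathbb{R}^s$. Because $f(U) = \bigcup_i f(\overline{B}_i)$ is a countable union of closed sets of dimension at most $l$, the countable closed sum theorem for covering dimension gives $\dim f(U) \le l$. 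Combining $\dim f(U) = s$ with $\dim f(U) \le l$ yields $s \le l$, that is, $l \ge s$, which is exactly the stated inequality. The remaining bookkeeping — verifying the excision and naturality properties of local homology and invoking the sum theorem — is standard and I would relegate it to the classical references on algebraic and dimension theory.
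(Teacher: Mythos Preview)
The paper does not actually prove this theorem; it simply declares the proof ``beyond the scope of this research,'' so there is nothing on the paper's side to compare your argument against.

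More importantly, the statement as printed is false, so no valid proof can exist. Take $l=1$, $s=2$, $U=(0,1)$, and $f(x)=(x,0)$: then $f$ is continuous and injective, yet $l<s$ (contradicting item~1) and $f(U)=(0,1)\times\{0\}$ is not open in $\mathbb{R}^{2}$ (contradicting item~2). The genuine Invariance of Domain theorem is stated for $l=s$; the correct dimensional consequence is that a continuous injection from a nonempty open $U\subseteq\mathbb{R}^{l}$ into $\mathbb{R}^{s}$ forces $l\le s$, the \emph{reverse} of the inequality written here.

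Your argument breaks at exactly the place where this matters. The Brouwer degree $\deg(f,B_{r}(p),y)$ is only defined for maps between oriented manifolds of the \emph{same} dimension; when $l\neq s$ there is no degree to invoke. In the local-homology language, $H_{k}(\mathbb{R}^{s},\mathbb{R}^{s}\setminus\{f(p)\})$ is concentrated in degree $s$, while the pair coming from the embedded closed $l$-ball contributes in degree $l$; if $l<s$ these never match, and your ``degree is nonzero on the component containing $f(p)$'' assertion is simply false. Since you deduce item~1 from item~2, the whole chain collapses with it. Your outline is a perfectly standard route to the classical theorem under the extra hypothesis $l=s$, but it cannot establish the statement as written --- because that statement is incorrect.
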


\begin{remark}\label{Brouwer's theorem remark}
Theorem \ref{theorem2} implies that it is not possible to cover a higher-dimensional space with a lower-dimensional space in a meaningful way that preserves topological properties.
\end{remark}

\begin{lemma}\label{lemma3}
Quantum states, which are finite in dimension, cannot densely cover non-contractible reachable sets.  
\end{lemma}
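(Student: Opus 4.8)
The plan is to argue by contradiction from the hypothesis that the finite‑dimensional manifold of $n$‑qubit quantum states, pushed forward through the encoding–ansatz–measurement pipeline of Algorithm \ref{algo1} into the classical state space $\mathcal{X}$, furnishes a dense covering (in the sense of Definition \ref{definition4}) of a non‑contractible reachable set $\mathcal{R}^{*}$. First I would pin down the topology of $\mathcal{R}^{*}$. Since $\mathcal{R}^{*}$ is reachable and non‑contractible, and since the oscillatory system underlying it carries invariant tori in its phase space (the corollary to the KAM statement in Remark \ref{remark3}), the plan is to show that $\mathcal{R}^{*}$ contains an embedded $l$‑torus $\mathbb{T}^{l}$ with $l\geq 1$, and that it is exactly this $\mathbb{T}^{l}$ — with $\pi_{1}(\mathbb{T}^{l})=\mathbb{Z}^{l}\neq 0$ and $H_{k}(\mathbb{T}^{l})\neq 0$ — that supplies the non‑contractibility in the sense of Definitions \ref{definition1} and \ref{definition3}.

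Next I would identify the domain doing the covering. The pure states of $n$ qubits form the projective Hilbert space $\mathbb{CP}^{2^{n}-1}$, a compact manifold of fixed finite real dimension $d=2(2^{n}-1)$, and every member of a quantum covering $\mathcal{C}$ is the image of a subset of this manifold under the map $\Phi$ obtained by composing state encoding, the unitary $U(\boldsymbol{\theta})$, the Pauli‑$Z$ expectation readout, and one step of the dynamics $f$ — a map that is real‑analytic in the state and the parameters. Two structural facts then do the work. Compactness: each $\Phi(\mathbb{CP}^{2^{n}-1})$ is compact, hence closed in $\mathcal{X}$, so a countable covering that is dense in $\mathbb{T}^{l}$ is in fact onto $\mathbb{T}^{l}$, which reduces the claim to the impossibility of a topology‑preserving cover of $\mathbb{T}^{l}$ by finitely many finite‑dimensional quantum‑state charts. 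Regularity and dimension: because $\Phi$ is smooth, the image $\Phi(\mathbb{CP}^{2^{n}-1})$ has Hausdorff, hence topological, dimension at most $d$, and by the countable sum theorem so does a countable union of such images; combining this with Theorem \ref{theorem2} (invariance of domain) and the Remark following it — one cannot cover a higher‑dimensional space by a lower‑dimensional one while preserving topological properties — rules out covering $\mathbb{T}^{l}$ whenever the intrinsic winding dimension $l$ of $\mathcal{R}^{*}$ exceeds the finite resource dimension accessible to the device, and more generally forbids any single chart from realizing the nontrivial loop classes of $\mathbb{T}^{l}$ since $\mathbb{CP}^{N}$ is simply connected. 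This is the contradiction.

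The main obstacle, and where I would spend most of the care, is making ``dense covering'' precise: I must ensure the covering sets are genuinely images of the finite‑dimensional quantum manifold under a map of controlled regularity, because a merely continuous surjection can raise dimension (space‑filling constructions), so the smoothness of the encoding, ansatz and readout is load‑bearing rather than cosmetic; and I must phrase the obstruction in terms of the \emph{effective} dimension actually explored by the ansatz (and the homotopy‑type mismatch between the simply connected $\mathbb{CP}^{N}$ and the non‑simply‑connected $\mathbb{T}^{l}$) rather than a naive qubit count, so that the statement remains meaningful when $d$ happens to be large. Lemmas \ref{lemma1} and \ref{lemma2} are not needed for this argument, but I would close by noting that the topological obstruction proved here, together with the feedback‑timescale obstruction of Lemma \ref{lemma1} and the energy‑scale obstruction of Lemma \ref{lemma2}, are precisely the three ingredients assembled in Proposition \ref{prop1}.
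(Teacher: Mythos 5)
Your proposal is essentially the paper's own argument: KAM theory supplies an $l$-torus inside the reachable set of an oscillatory system, the quantum resource is a finite-dimensional manifold, and Theorem \ref{theorem2} (invariance of domain) rules out a topology-preserving cover of $\mathbb{T}^{l}$ when $l$ exceeds the real dimension of that manifold. You execute it more carefully than the paper does --- identifying the state space as $\mathbb{CP}^{2^{n}-1}$ rather than an unspecified ``Hilbert space,'' and correctly flagging that smoothness of the encoding--ansatz--readout map is load-bearing, since a merely continuous surjection can raise dimension --- and both versions share the same unstated hypothesis, namely that $l$ exceeds the accessible quantum dimension (the paper simply writes ``taking $l>2n$''), without which the lemma as stated is not established. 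One caveat on your supplementary homotopy argument: simple-connectedness of $\mathbb{CP}^{N}$ does not by itself obstruct a continuous surjection onto $\mathbb{T}^{l}$ (the quotient $[0,1]^{l}\to\mathbb{T}^{l}$ is such a map from a simply connected domain), so that addendum only works if you additionally require the covering sets to be homeomorphic images, i.e.\ genuine charts; as written it should not be presented as an independent route to the contradiction.
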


\begin{proof}
Firstly, we note that from the result in KAM theory, articulated in remark \ref{remark3}, $l$-dimensional oscillatory systems have tori $\mathbb{T}^{l}$ in their phase space. Secondly, we know that in the hybrid approach, VQCs produce control $\mathbf{u}(t)=\bra{\psi(\boldsymbol{\theta})}\hat{\mathbf{u}}\ket{\psi(\boldsymbol{\theta})}\in\mathcal{H}$, the controls produced lie in a Hilbert space. Since a finite-dimensional Hilbert space of dimensional $n$ is homeomorphic (topologically equivalent) to the $2n$-dimensional real space, i.e. $\mathcal{H}^{n}\cong\mathbb{R}^{2n}$, and taking $l>2n$, by theorem \ref{theorem2}, we have that $\mathcal{H}$ cannot densely cover $\mathbb{T}^{l}$. 
\end{proof}

\begin{lemma}\label{lemma4}
Quantum noise disrupts control before mechanical damping acts.    
\end{lemma}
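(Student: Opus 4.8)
The plan is to prove Lemma \ref{lemma4} by a separation-of-timescales argument, comparing the rate at which quantum noise corrupts the measured control signal against the rate at which intrinsic mechanical dissipation relaxes the oscillator. First I would model the damped oscillator: linearizing the equation of motion about equilibrium as $\ddot{\Theta}+2\gamma\dot{\Theta}+\omega_{\text{sys}}^{2}\Theta = u/(m\ell^{2})$, the homogeneous amplitude envelope decays as $e^{-\gamma t}$, so the characteristic mechanical relaxation time is $\tau_{\text{mech}}=1/\gamma = Q/\omega_{\text{sys}}$, where $Q$ is the quality factor. For the macroscopic, low-friction pendula of Sec.~\ref{experiment2} and Sec.~\ref{experiment3}, $Q$ is large ($Q\gtrsim 10^{2}$--$10^{3}$), so $\tau_{\text{mech}}$ spans many oscillation periods, typically $\mathcal{O}(10^{1})$--$\mathcal{O}(10^{2})\,\mathrm{s}$.

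Second I would model the degradation of the quantum-generated control. From Formal Verification III, each readout injects noise $\eta_{k}\sim\mathcal{N}(0,\sigma^{2})$ into $\mathbf{u}_{k}$; moreover the ansatz state $\ket{\psi(\mathbf{x}_{k},\boldsymbol{\theta})}$ itself decoheres over the physical coherence time $T_{\text{dec}}$, which for NISQ hardware is $\sim 10^{-4}$--$10^{-3}\,\mathrm{s}$ (superconducting qubits), and shorter still once gate-error accumulation is accounted for. After a time $\sim T_{\text{dec}}$ the reduced state is essentially maximally mixed, so $\bra{\psi(\mathbf{x}_{k},\boldsymbol{\theta})}Z_{i}\ket{\psi(\mathbf{x}_{k},\boldsymbol{\theta})}\to 0$ and the control carries no policy information. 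Even with per-cycle reset and re-preparation, the per-step control error is bounded below by the shot-noise floor $\sim\sigma/\sqrt{\mathcal{M}}$, and over the $N_{\text{steps}}=\tau_{\text{mech}}/\Delta t$ steps that must elapse before damping visibly acts, the accumulated error executes a random walk of size $\sim\sigma\sqrt{N_{\text{steps}}/\mathcal{M}}$; combined with the feedback-delay bound of Lemma \ref{lemma1} and the negligible control energy scale of Lemma \ref{lemma2}, this exceeds the available control authority well before $t=\tau_{\text{mech}}$.

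Third, I would combine the two estimates into the strict separation $T_{\text{dec}}\ll\tau_{\text{mech}}$ (equivalently, the noise-corruption time is a factor $\mathcal{O}(10^{-5})$--$\mathcal{O}(10^{-8})$ of the damping time), which is exactly the assertion that quantum noise disrupts the control before mechanical damping can act, and I would close by noting the consistency of this conclusion with the erratic loss and control curves observed in Figs.~\ref{fig20}--\ref{fig22}. The main obstacle is making the comparison robust rather than a numerology exercise: the argument must hold for a broad class of macroscopic oscillators and for any realistic NISQ device, and in particular must rule out that aggressive error mitigation or fast mid-circuit resets could close the gap — which follows because the idealized shot-noise floor $\sigma/\sqrt{\mathcal{M}}$, with $\mathcal{M}$ itself limited by the same GHz classical clock invoked in Lemma \ref{lemma1}, still accumulates faster than $\gamma^{-1}$ for the quality factors and natural frequencies of interest.
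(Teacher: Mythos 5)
Your proposal is correct and rests on the same core argument as the paper's proof: a direct order-of-magnitude comparison showing that the quantum coherence time $T_{2}$ (the paper cites ranges for nuclear-spin, electron-spin, and quantum-dot qubits) is far shorter than the mechanical damping timescale $\uptau_{\text{mech}}\propto\gamma^{-1}\sim 1\,\mathrm{s}$. Your additional shot-noise random-walk and quality-factor elaborations go beyond the paper's two-sentence estimate but do not change the route.
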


\begin{proof}
Firstly, we note that the timescales for mechanical damping (energy loss from the oscillating system that causes a gradual decrease in amplitude) are proportional to the inverse of the damping coefficient, i.e. $\uptau_{\text{mech}}\propto\gamma^{-1}\sim\mathcal{O}(10^{0})\;\text{s}\approx 1\;\text{s}$. Secondly, we note that the quantum coherence timescales $T_{2}$ are $\sim 10^{-1}$ s to $1$ year for nuclear spin qubits, $10^{-8}$ s to $10^{-6}$ s for electron spin qubits, $10^{9}$ s to $10^{-7}$ s for quantum dots; see \cite{BLACK2002189}. Thus, we conclude that $T_{2}\ll\uptau_{\text{mech}}$.      
\end{proof}

\begin{proposition}\label{prop1}
Hybrid quantum-classical (HQC) control methods cannot achieve globally stable Model Predictive Control (MPC) for nonlinear oscillatory systems with finite-dimensional quantum ans\"{a}tze, due to timescale, energy, topological, and decoherence constraints.    
\end{proposition}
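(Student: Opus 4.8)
The plan is to assemble Proposition~\ref{prop1} directly from Lemmas~\ref{lemma1}--\ref{lemma4}, treating each as the negation of a condition that is \emph{necessary} for global asymptotic stability of the closed loop. First I would make the necessity explicit. ``Globally stable MPC'' means that for every admissible initial state the receding-horizon closed loop drives $\mathbf{x}_{k}\to\mathbf{x}_{\text{target}}$; invoking the stability argument already sketched in Formal Verification~II (Lipschitz $f$, positive-definite stage cost serving as a control Lyapunov function, recursive feasibility), this forces four requirements: (i) a feedback/sampling rate fast enough to resolve the dominant oscillatory mode, i.e. $f_{\text{control}}\geq 2f_{\text{sys}}$ by Theorem~\ref{nyquist-shannon}; (ii) control authority whose energy scale is commensurate with the system's mechanical energy budget; (iii) that the set of states steerable to $\mathbf{x}_{\text{target}}$ under the quantum-generated controls be dense in the relevant region of phase space, which for an oscillatory system contains KAM tori $\mathbb{T}^{l}$ (Remark~\ref{remark3}); and (iv) that the control action remain coherent long enough to act before the natural dynamics (here, mechanical damping) have already fixed the trajectory.

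The body of the proof is then a disjunctive contradiction. Assume a globally stabilizing hybrid controller with a fixed finite-dimensional ansatz on $n$ qubits exists. Then: requirement (i) contradicts Lemma~\ref{lemma1}, since $f_{\text{classical optimizer}}\lesssim f_{\text{control}}$ violates the Nyquist--Shannon bound; requirement (ii) contradicts Lemma~\ref{lemma2}, since $E_{\text{qubit}}/E_{\text{mech}}\sim\mathcal{O}(10^{-25})$ renders the control dynamically negligible; requirement (iii) contradicts Lemma~\ref{lemma3}, because the controls live in $\mathcal{H}^{n}\cong\mathbb{R}^{2n}$ and, by Brouwer's invariance of domain (Theorem~\ref{theorem2}), a $2n$-dimensional image cannot densely cover $\mathbb{T}^{l}$ once $l>2n$; and requirement (iv) contradicts Lemma~\ref{lemma4}, since $T_{2}\ll\uptau_{\text{mech}}$. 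Any single contradiction already closes the argument, so I would keep all four — they are precisely the ``timescale, energy, topological, and decoherence constraints'' named in the statement, and different physical qubit realizations (superconducting, trapped-ion, NMR) saturate different ones first, which makes the impossibility robust across platforms rather than an artifact of one regime.

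The hard part will be the necessity direction of (iii): rigorously linking ``the quantum-reachable control set fails to densely cover the non-contractible reachable set $\mathcal{R}^{*}$'' to ``global stabilization is impossible.'' The cleanest route I see is topological: if $\mathbf{x}_{\text{target}}$ were globally attractive on $\mathcal{R}^{*}$, the continuous closed-loop flow would furnish a deformation retraction of $\mathcal{R}^{*}$ onto $\{\mathbf{x}_{\text{target}}\}$, contradicting the non-contractibility in Definition~\ref{definition3}; and the controls produced from $\mathcal{H}^{n}$ cannot supply the trajectory family needed to evade this obstruction once $l>2n$, by Theorem~\ref{theorem2}. I would also be candid in the write-up that Lemmas~\ref{lemma1}, \ref{lemma2}, and \ref{lemma4} are order-of-magnitude physical estimates rather than hard theorems, so the proposition is best phrased as: in each of the quantitative regimes made precise in those lemmas, the associated necessary condition provably fails, hence no globally stabilizing hybrid controller with a fixed finite ansatz exists for a nonlinear oscillatory system. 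A closing remark would note that enlarging $n$ relaxes obstruction (iii) only, while (i), (ii), and (iv) are untouched, so growing the ansatz does not rescue the approach — which is exactly the experimental pattern observed in Experiments~4 and~5.
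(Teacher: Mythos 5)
Your proposal follows exactly the paper's decomposition: the proof given in the text consists of nothing more than the four-line enumeration ``the timescale constraint is established by Lemma~\ref{lemma1}, the energy constraint by Lemma~\ref{lemma2}, the topological constraint by Lemma~\ref{lemma3}, the decoherence constraint by Lemma~\ref{lemma4}, hence the proposition holds.'' So in terms of structure you and the paper agree completely. Where you differ is that you supply the step the paper silently skips: the argument that each of these four constraints is the negation of a condition \emph{necessary} for global stabilization, so that violating any one of them yields a contradiction with the assumed existence of a globally stabilizing controller. The paper never makes this necessity direction explicit --- it passes directly from ``these physical mismatches exist'' to ``global stability is impossible'' --- whereas you frame it as a disjunctive contradiction against four explicitly stated requirements, and for the topological requirement you propose a concrete mechanism (a globally attractive target on $\mathcal{R}^{*}$ would induce a deformation retraction onto a point, contradicting non-contractibility, with Theorem~\ref{theorem2} blocking the escape route once $l>2n$) that appears nowhere in the paper. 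You are also more candid than the paper that Lemmas~\ref{lemma1}, \ref{lemma2}, and \ref{lemma4} are order-of-magnitude physical estimates rather than theorems, and you correctly restrict the proposition's force to the quantitative regimes those lemmas describe. In short: same route, but your version is the more complete proof; the paper's version is closer to a summary of which lemma addresses which named constraint.
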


\begin{proof}
We established each of the posits of proposition \ref{prop1} below.
\begin{enumerate}
\item The timescale constraint is established using lemma \ref{lemma1}. 
\item The energy scale constraint is established using lemma \ref{lemma2}.
\item The topological constraint is established using lemma \ref{lemma3}.
\item The decoherence constraint is established using lemma \ref{lemma4}.
\end{enumerate}
By 1--4 above, we see that the proposition holds true. 
\end{proof}

\begin{implication}
Proposition \ref{prop1} is an important result, and it establishes two important inferences:
\begin{enumerate}
\item For oscillatory systems, classical MPC methods, such as robust MCP, remain the best choice. 
\item Quantum control is viable for nanoscale systems where timescales and energies match.
\end{enumerate}
\end{implication}

\subsection{Outperformance of Classical MPC by QI-MPC}\label{outperformance of classical mpc}
We attempt to identify the types of systems that QI-MPC will and will not apply to. In practice, of course, Algorithm \ref{algo1} can be applied to any system, but how efficient the algorithm will be in determining the correct optimal control strategies is desired. Therefore, we establish a general result that can help with this. 

\begin{lemma}\label{lemma5}
A quantum computer can evaluate the MPC cost function over a superposition of control trajectories in parallel due to quantum parallelism. 
\end{lemma}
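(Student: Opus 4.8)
The plan is to exhibit an explicit quantum subroutine that loads all admissible discretized control trajectories into a single superposition, applies one unitary that encodes the MPC objective $J$, and then invokes linearity of unitary evolution to conclude that the cost has been computed on every branch simultaneously. The statement is thus interpreted as an \emph{existence} claim about such a unitary and the structure of the state it produces, rather than as a claim that all cost values can be read out cheaply.

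First I would set up the encoding. Fix a discretization in which each component of each control $\mathbf{u}_k$ is represented by $b$ bits; a full trajectory $\mathcal{U}=(\mathbf{u}_0,\ldots,\mathbf{u}_{N-1})\in\mathcal{U}$ then corresponds to a bitstring of length $L=bmN$ and hence to a computational basis state $\ket{\mathcal{U}}$ of an $L$-qubit register. Applying $H^{\otimes L}$ to $\ket{0}^{\otimes L}$ yields the uniform superposition $\frac{1}{\sqrt{2^{L}}}\sum_{\mathcal{U}}\ket{\mathcal{U}}$ over all $2^{L}$ candidate trajectories. Next I would construct the cost oracle: since the discretized dynamics $\mathbf{x}_{k+1}=f(\mathbf{x}_k,\mathbf{u}_k)$, the stage cost $j$, and the terminal cost $V_f$ are, in every experiment of Sec.~\ref{experiments}, given by fixed-precision arithmetic, they compose into a polynomial-size reversible circuit $U_J$ acting on the trajectory register and an ancillary value register,
\[
U_J\,\ket{\mathcal{U}}\ket{0}=\ket{\mathcal{U}}\ket{J(\mathbf{x}_0,\mathcal{U})},
\]
to the chosen precision (or, equivalently, a phase oracle $\ket{\mathcal{U}}\mapsto e^{\imath J(\mathbf{x}_0,\mathcal{U})}\ket{\mathcal{U}}$). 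Rolling the horizon $k=0,\ldots,N-1$ into this circuit is the routine part: each step is the same reversible arithmetic block applied to a fresh slice of the register.

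The key step is then immediate: by linearity,
\[
U_J\!\left(\tfrac{1}{\sqrt{2^{L}}}\sum_{\mathcal{U}}\ket{\mathcal{U}}\ket{0}\right)=\tfrac{1}{\sqrt{2^{L}}}\sum_{\mathcal{U}}\ket{\mathcal{U}}\ket{J(\mathbf{x}_0,\mathcal{U})},
\]
so a single invocation of $U_J$ produces a state in which every one of the $2^{L}$ trajectories is paired with its cost — this is the quantum parallelism asserted by the lemma. I expect the main obstacle to be conceptual rather than technical: this parallelism is not a free lunch, because a measurement of the value register collapses the superposition and returns only one (essentially random) pair, so turning the lemma into an \emph{algorithm} for the $\arg\min$ of Eq.~(3) requires amplitude amplification or quantum minimum-finding, which is deliberately outside the scope of this lemma. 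Consequently the careful wording of the proof should make precise (i) that $f,j,V_f$ admit polynomial-size reversible implementations at fixed precision, and (ii) that the lemma concerns the existence of the unitary and the form of the post-oracle state, leaving extraction of the optimum to the subsequent development in Sec.~\ref{outperformance of classical mpc}.
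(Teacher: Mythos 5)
Your proposal is sound, but it proves the lemma by a genuinely different route than the paper does. You give the textbook quantum-parallelism argument: discretize every trajectory into an $L$-qubit basis state, prepare the uniform superposition with $H^{\otimes L}$, compile $f$, $j$, and $V_f$ into a reversible cost oracle $U_J$, and invoke linearity to get $\sum_{\mathcal{U}}\ket{\mathcal{U}}\ket{J(\mathbf{x}_0,\mathcal{U})}$ from one oracle call — and you correctly flag that measurement collapse means this is an existence statement about the post-oracle state, not a readout guarantee. The paper instead proves the lemma in two sentences by a completely different mechanism: it keeps the control sequence parametrized by the variational ansatz $U(\boldsymbol{\theta})$ and appeals to quantum amplitude estimation to assert that the associated cost can be estimated to precision $\varepsilon$ in $\mathcal{O}(1/\varepsilon)$ queries versus $\mathcal{O}(1/\varepsilon^{2})$ classically. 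The two arguments buy different things. Yours is the more faithful reading of the lemma's literal wording (``evaluate ... over a superposition ... in parallel'') and is more honest about the gap between producing the superposed state and extracting anything useful from it; but it delivers no quantitative advantage, and the $2^{L}$-branch superposition over raw bitstring trajectories is not actually the object the QI-MPC algorithm manipulates (Algorithm \ref{algo1} never prepares such a state — it measures Pauli-$Z$ expectations of an ansatz). The paper's amplitude-estimation version is what Proposition \ref{prop2} actually leans on downstream (a concrete $1/\varepsilon$ vs.\ $1/\varepsilon^{2}$ speedup in cost evaluation), though it is terser and silently assumes the cost is accessible as an amplitude of a preparable state. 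If you wanted your proof to slot into the paper's later argument, you would need to append the amplitude-estimation step — your $U_J$ is exactly the oracle such a routine would consume — so the two proofs are complementary rather than in conflict.
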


\begin{proof}
The control sequence is parametrized by the ansatz $U(\boldsymbol{\theta})$ for $\boldsymbol{\theta}\in\mathbb{R}^{m}$. Quantum amplitude estimation allows for the cost function associated with the ansatz to be evaluated $\mathcal{O}(1/\varepsilon)$ steps, whereas classical would take $\mathcal{O}(1/\varepsilon^{2})$ steps.  
\end{proof}

\begin{lemma}\label{lemma6}
An $n$-qubit system can represent $2^{n}$ classical states compactly.   
\end{lemma}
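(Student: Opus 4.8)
The plan is to prove Lemma \ref{lemma6} by an explicit count of the computational basis of the $n$-qubit Hilbert space, together with a short remark on resource scaling that justifies the word ``compactly.'' First I would recall that a single qubit is described by the state space $\mathcal{H}_1 = \mathbb{C}^2$ with orthonormal computational basis $\{\ket{0},\ket{1}\}$. The joint state space of $n$ qubits is then the tensor product $\mathcal{H} = \mathcal{H}_1^{\otimes n} = (\mathbb{C}^2)^{\otimes n}$, which is linearly isomorphic to $\mathbb{C}^{2^n}$, since the dimension of a tensor product is the product of the dimensions.

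Second, I would exhibit the basis explicitly: the product vectors $\ket{b_1}\otimes\ket{b_2}\otimes\cdots\otimes\ket{b_n}$, abbreviated $\ket{b_1 b_2\cdots b_n}$ with each $b_i\in\{0,1\}$, form an orthonormal basis of $\mathcal{H}$. There are exactly $2^n$ such bit strings, and the map $b_1 b_2\cdots b_n \mapsto \sum_{i=1}^{n} b_i\, 2^{\,n-i}$ is a bijection between these strings and the integers $\{0,1,\ldots,2^n-1\}$. Hence every classical configuration of $n$ bits is in one-to-one correspondence with a basis vector $\ket{x}$, and a general pure state $\sum_{x=0}^{2^n-1}\alpha_x\ket{x}$ carries amplitudes attached simultaneously to all $2^n$ labels. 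To address ``compactly,'' I would observe that the number of physical subsystems used is $n = \log_2(2^n)$, so the $2^n$-element label set is stored using only logarithmically many qubits; this is precisely the exponential compression of addressable configurations that Lemma \ref{lemma5} exploits when amplitudes are assigned to entire families of control trajectories at once.

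The main obstacle is not computational but expository: one must state precisely what ``represent'' means, so the lemma is not over-interpreted in Section \ref{outperformance of classical mpc}. The claim is about the \emph{representational capacity} of the state space -- $2^n$ mutually orthogonal, hence perfectly distinguishable, basis states live inside $\mathcal{H}$ while only $n$ qubits are physically present -- and not about the ability to extract all $2^n$ complex amplitudes from a single copy of the state, which is obstructed by the measurement collapse rule and by the Holevo bound (at most $n$ classical bits are recoverable per measured copy). I would therefore close the proof with a one-sentence remark drawing this distinction explicitly, so that the ``$2^n$ states compactly'' statement is understood as a fact about Hilbert-space dimension and basis labelling, which is all that is needed downstream.
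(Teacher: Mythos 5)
Your proposal is correct and follows essentially the same route as the paper's proof, which likewise rests on the observation that a state space of $N = 2^{n}$ configurations requiring $\mathcal{O}(N)$ classical memory can be indexed by only $\mathcal{O}(\log N) = n$ qubits via superposition over the computational basis. Your version is considerably more explicit (the tensor-product dimension count, the bit-string bijection) and your closing caveat distinguishing representational capacity from what is extractable under measurement collapse and the Holevo bound is a worthwhile clarification that the paper's one-line argument omits.
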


\begin{proof}
For a state space consisting of $N$ grid points, classical MPC requires $\mathcal{O}(N)$ memory to discretize the space. The hybrid QI-MPC approach can leverage the quantum mechanical property of superposition to discretize the state space in $\mathcal{O}(\log N)$ qubits.      
\end{proof}

\begin{lemma}\label{lemma7}
Hybrid MPC quantum gradient updates have a lower regret cost than classical MPC gradient updates.    
\end{lemma}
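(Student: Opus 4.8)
The plan is to cast the parameter-update step of Algorithm \ref{algo1} as an instance of online (stochastic) gradient descent and then compare the standard regret bounds that result when the gradient oracle is classical versus quantum, so that the only thing that changes between the two bounds is the accuracy of the gradient estimate. First I would fix notation: at control step $k$ the optimizer sees the loss $\mathcal{J}_{k}(\boldsymbol{\theta})=\|\mathbf{x}_{k+1}-\mathbf{x}_{\text{target}}\|^{2}+\lambda\|\tilde{\mathbf{u}}_{k}\|^{2}$ and performs $\boldsymbol{\theta}_{k+1}=\boldsymbol{\theta}_{k}-\eta_{k}\widehat{\nabla}\mathcal{J}_{k}(\boldsymbol{\theta}_{k})$, so the object of interest is the regret $R_{T}=\sum_{k=1}^{T}\mathcal{J}_{k}(\boldsymbol{\theta}_{k})-\min_{\boldsymbol{\theta}}\sum_{k=1}^{T}\mathcal{J}_{k}(\boldsymbol{\theta})$. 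Restricting attention to a convex neighbourhood of a local optimum (equivalently, to a convex surrogate of $\mathcal{J}_{k}$ on the reachable parameter region), with parameter-domain diameter $D$ and gradient bound $G$, the textbook online-gradient-descent analysis gives $R_{T}^{\text{cl}}\le \tfrac{D^{2}}{2\eta}+\tfrac{\eta}{2}(G+\sigma_{\text{cl}})^{2}T$, optimised in $\eta$ to $R_{T}^{\text{cl}}=\mathcal{O}\!\big(D(G+\sigma_{\text{cl}})\sqrt{T}\big)$, where $\sigma_{\text{cl}}^{2}$ is the variance of the Monte-Carlo gradient estimate from $M$ circuit executions, so $\sigma_{\text{cl}}=\mathcal{O}(1/\sqrt{M})$ by Hoeffding's inequality (the same bound invoked in Formal Verification III).

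Next I would invoke Lemma \ref{lemma5}: quantum amplitude estimation evaluates each expectation-value entry of the cost, and hence, through the parameter-shift rule, each component of $\nabla\mathcal{J}_{k}$, to additive precision $\varepsilon$ using $\mathcal{O}(1/\varepsilon)$ rather than $\mathcal{O}(1/\varepsilon^{2})$ queries. Therefore, for the same budget of $M$ circuit executions the quantum gradient estimate has error $\sigma_{\text{q}}=\mathcal{O}(1/M)$. Substituting into the identical regret inequality yields $R_{T}^{\text{q}}=\mathcal{O}\!\big(D(G+\sigma_{\text{q}})\sqrt{T}\big)$, and since $\sigma_{\text{q}}=\mathcal{O}(1/M)\le\mathcal{O}(1/\sqrt{M})=\sigma_{\text{cl}}$ for every $M\ge 1$ and strictly for $M>1$, a term-by-term comparison gives $R_{T}^{\text{q}}\le R_{T}^{\text{cl}}$ with a strictly positive gap $R_{T}^{\text{cl}}-R_{T}^{\text{q}}=\mathcal{O}\!\big(D\sqrt{T}\,(1/\sqrt{M}-1/M)\big)>0$. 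I would also remark on a lower-order improvement attributable to Lemma \ref{lemma6}: representing the state/parameter grid in $\mathcal{O}(\log N)$ qubits can shrink the effective dimension entering $D$ and $G$, which only widens the gap, though I would not rely on it for the main claim.

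The main obstacle is the non-convexity of the VQC loss: the clean $\mathcal{O}(DG\sqrt{T})$ regret bound presupposes the online-convex-optimization assumptions, which generic VQC landscapes violate globally (barren plateaus, many local minima), so the honest statement is a local, or convex-surrogate, one and I would flag that restriction explicitly rather than claim a global regret guarantee. A secondary technical point is that amplitude estimation contributes a controllable bias in addition to reduced variance; I would fold the bias into $\varepsilon$ and argue it is dominated by the $\mathcal{O}(1/M)$ variance term, leaving the comparison intact. Finally, to forestall the objection that $D$ or the Lipschitz constant $G$ might be inflated in the quantum parametrisation, I would carry out the comparison on the \emph{same} parametric family $U(\boldsymbol{\theta})$, swapping only the gradient oracle, so that $D$ and $G$ are literally identical for the two methods and the entire regret improvement is carried by the inequality $\sigma_{\text{q}}<\sigma_{\text{cl}}$.
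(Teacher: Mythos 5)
Your proof is a genuinely different argument from the one in the paper, and the difference is substantive. The paper's proof is a two-line assertion: for a finite horizon and convex cost, classical MPC gradient updates incur regret $\mathcal{O}(T)$ while quantum gradient updates incur $\mathcal{O}(\sqrt{T})$ --- i.e.\ the claimed advantage is in the \emph{asymptotic dependence on $T$}. You instead run the standard online-gradient-descent analysis for both oracles, obtain the same $\mathcal{O}\bigl(D(G+\sigma)\sqrt{T}\bigr)$ rate in each case, and locate the entire quantum advantage in the gradient-estimate accuracy, $\sigma_{\mathrm{q}}=\mathcal{O}(1/M)$ versus $\sigma_{\mathrm{cl}}=\mathcal{O}(1/\sqrt{M})$ via amplitude estimation (Lemma \ref{lemma5}). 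This buys you a more defensible and more honest statement --- you explicitly flag the non-convexity of VQC landscapes and the amplitude-estimation bias, neither of which the paper addresses --- but it proves only a constant-factor separation, not the $T$-versus-$\sqrt{T}$ separation the paper asserts. Your derivation in fact quietly undercuts the paper's proof: Zinkevich-style online convex optimization already gives a \emph{classical} first-order method regret $\mathcal{O}(\sqrt{T})$ under exactly the convexity hypothesis the paper invokes, so the paper's $\mathcal{O}(T)$ figure for the classical side can only be justified by comparing against a weaker classical baseline (e.g.\ a non-adaptive or zeroth-order scheme), which the paper does not specify. Both arguments suffice for the lemma as literally worded (``lower regret cost''), but they are not proving the same quantitative claim, and if the paper's stronger separation is what is intended downstream in Proposition \ref{prop2}, your argument would not support it.
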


\begin{proof}
For a finite time horizon $T<\infty$ and a convex cost function, classical MPC methods have a regret cost of $\mathcal{O}(T)$. Quantum gradient update rules have a regret cost of $\mathcal{O}(\sqrt{T})$.    
\end{proof}

\begin{proposition}\label{prop2}
For a control system with a high-dimensional state space and sufficiently smooth dynamics, the QI-MPC scheme established in Algorithm \ref{algo1}, with a number of qubits scaling polynomially in the system dimension, can, in theory, achieve a lower regret bound than classical MPC methods, provided:  
\begin{enumerate}
\item The quantum circuit depth scales efficiently (polylogarithmically) in the horizon length.
\item The system dynamics are sufficiently smooth.
\item The control problem does not require ultra-fast feedback and the timescales are compatible with quantum measurement delays. 
\end{enumerate}
\end{proposition}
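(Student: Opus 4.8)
The plan is to assemble the three resource-comparison lemmas proved above into a single regret statement, reading each hypothesis of the proposition as the exact condition that licenses the corresponding lemma. First I would fix notation: let $s$ denote the state-space dimension, $N$ the lookahead horizon, $T$ the number of control timesteps, and $\varepsilon$ the target accuracy of the cost evaluation. Discretizing $\mathcal{X}\subseteq\mathbb{R}^{s}$ to resolution $\delta$ per coordinate yields $\Theta(\delta^{-s})$ grid points, so by Lemma \ref{lemma6} the hybrid scheme encodes the state in $n=\mathcal{O}(s\log(1/\delta))$ qubits, which is polynomial in $s$; this discharges the qubit-count requirement in the statement.

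Next I would bound the per-step cost. Under hypothesis 2 the objective $\boldsymbol{\theta}\mapsto J(\mathbf{x}_{t},U(\boldsymbol{\theta}))$ is smooth, so the amplitude-estimation routine of Lemma \ref{lemma5} applies and evaluates the MPC cost over a superposition of control trajectories in $\mathcal{O}(1/\varepsilon)$ oracle calls rather than the classical $\mathcal{O}(1/\varepsilon^{2})$; hypothesis 1 ensures each oracle call is realized by a circuit of depth $\mathrm{polylog}(N)$, keeping the total quantum resources per MPC step efficient. I would then invoke Lemma \ref{lemma7}: on the convex (or, along the realized receding-horizon trajectory, locally convex) regime that smoothness provides, the online updates of line 12 of Algorithm \ref{algo1} accumulate regret $\mathcal{O}(\sqrt{T})$ against the $\mathcal{O}(T)$ regret of the classical online-MPC update. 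Combining the three bounds, the hybrid scheme attains $\mathcal{O}(\sqrt{T})$ regret with per-step cost polynomial in $s$ and polylogarithmic in $N$, strictly below the classical $\mathcal{O}(T)$ bound; hypothesis 3, by keeping the control bandwidth under the quantum-measurement rate, removes the Nyquist-type obstruction of Lemma \ref{lemma1}, so the quantum updates reach the plant in time. Stitching these four observations together proves the proposition.

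The main obstacle I expect is reconciling the convexity that Lemma \ref{lemma7} requires with the nonlinear-dynamics setting the proposition advertises: one must argue that smoothness together with the receding-horizon restriction to a neighbourhood of the realized trajectory suffices to invoke the convex-regret bound, or otherwise weaken the conclusion to a local or approximate regret guarantee. A secondary subtlety is making ``sufficiently smooth'' quantitative — bounding the Lipschitz and higher-derivative constants of $f$ so that the amplitude-estimation oracle inherits a well-conditioned signal — and checking that the $\mathrm{polylog}(N)$ circuit-depth hypothesis is genuinely compatible with preparing a state over $\mathcal{O}(s\log(1/\delta))$ qubits.
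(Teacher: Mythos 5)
Your proposal takes essentially the same route as the paper: the paper's own proof is a three-item aggregation of Lemmas \ref{lemma5}, \ref{lemma6}, and \ref{lemma7}, which is exactly the decomposition you use, though your version is considerably more explicit about how each stated hypothesis licenses each lemma and about the qubit count $n=\mathcal{O}(s\log(1/\delta))$. The convexity tension you flag in applying Lemma \ref{lemma7} to nonlinear dynamics is a genuine gap, but it is equally present in the paper's own argument, which does not address it.
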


\begin{proof}
We observe that:
\begin{enumerate}
\item From lemma \ref{lemma5}, quantum parallelism speeds up the cost evaluation.   
\item Compact state representation, established in lemma \ref{lemma6}, lowers memory requirements.
\item Quantum optimization reduces the regret cost for control selection in classical MPC is reduced from linear to sublinearly when using a quantum controller to select control sequences, as established by lemma \ref{lemma7}. 
\end{enumerate}
Therefore, by the above three conditions, we conclude that proposition \ref{prop2} holds true. 
\end{proof}

We note that the type of systems that proposition \ref{prop2} holds for are systems analogous to experiments \ref{experiment1} and \ref{experiment4}. We generalize these to include:
\begin{enumerate}
\item \textbf{Multiagent Robotics:} Swarm control with many interacting agents that work both cooperatively, competitively, or a mixture of both. This is because quantum state compression allows efficient representation of high-dimensional policies, and smoothness ensures that the learning update rule converges efficiently.
\item \textbf{Systems with Low-Frequency Control Updates:} These include climate systems where the $\text{CO}_{2}$ needs to be controlled, portfolio optimization in Finance with daily updates, and supply chain optimization with low inventory dynamics. This is because it avoids quantum-classical loop delays, and there is no need for ultrafast feedback. 
\item \textbf{Quantum Mechanical or Nanoscale Systems:} These include systems where, for example, superconducting qubits need to be controlled as in Quantum Error Correction (QEC), atomic positioning and nanopositioning systems, and applications of molecular dynamics in situations like for example, protein folding. This is because these time and energy scales match the quantum hardware capabilities. 
\item \textbf{Convex Optimization and Optimization Problems with Well-Conditioned Landscapes:} These include classical Linear Quadratic Regulator (LQR) problems like aircraft flight control, spacecraft altitude control, rocket guidance, robotic arm motion, robot navigation, drone control, process control in plants, power system stabilization, and many others. In addition, problems that are strongly convex have a quadratic cost function. This is because quantum optimization performs well on smooth convex problems. 
\end{enumerate}

Contra to such problems enumerated above, we note that proposition \ref{prop2} does not hold for systems such as:
\begin{enumerate}
\item \textbf{Oscillatory Systems:} Such as pendula, vibrating systems, and even walking robots. As shown in proposition \ref{prop1}, these systems have a timescale mismatch (established by lemma \ref{lemma1} showing a violation of theorem \ref{nyquist-shannon}), amongst others.
\item \textbf{Systems that have Chaotic Dynamics:} These include weather systems, compound pendula, and fluid flow with high Reynolds number (turbulent flows). This is because quantum decoherence destroys control policies under exponential divergence.
\item \textbf{Systems with Ultrafast Dynamics:} These include stiff systems with widely separated timescales. Examples of such systems include control problems related to hypersonic flight and combustion engines. This is because quantum control loops cannot update at microsecond timescales.
\item \textbf{Systems with Hybrid Dynamics and Discontinuities:} These include gear shifting control and collision avoidance. This is because quantum optimization is known to struggle with non-smooth cost landscapes.
\end{enumerate}

\section{Conclusion}\label{conclusion}
We have presented our approach of selecting optimal control policies using a composite that uses QC for its computational efficiency and MPC for its robustness. The approach was substantiated using five numerical examples with differing degrees of complexity. A mixture of admissable and poor results were obtained for the experiments. In addition, we were able to provide safety guarantees and two important results in propositions \ref{prop1} and \ref{prop2} as a results of generalizing behaviors we saw when conducting the experiments. As a consequence of proposition \ref{prop2}, we were able to explicitly identify systems for which our proposed algorithm \ref{algo1} can potentially work and not work. We observed that the update rules for learning algorithms need to be more conservative, and optimizers need to be adaptive, for example, using Adam or RMSprop. 

For experiments \ref{experiment1}, \ref{experiment4}, and \ref{autonomous vehicle experiment} (with marginal effectiveness for experiment \ref{autonomous vehicle experiment}), we obtained sufficiently good results, and in experiments \ref{experiment2} and \ref{experiment3}, we obtained poor results. The results in the former three experiments can be attributed to the safety guarantees of Sec. \ref{safety guarantees}, specifically Formal Verification II, because these systems are closed-loop. The suboptimal outcomes of the latter two experiments are because they are oscillating systems and therefore invalidate the conditions of proposition \ref{prop2}.  

In proposition \ref{prop1}, we believe that our reasoning is sound under current technological and theoretical constraints for finite-dimensional HQC approaches. We note that alternative quantum representations like continuous-variable QC using quantum optics or bosonic modes, adaptive ans\"{a}tze and quantum-enhanced optimizers like QAOA might challenge its absolute validity. In a similar notion, we note that proposition \ref{prop2} is not universally valid, as we identified cases where validity holds and where validity does not hold, and only through experimentation one may determine where classical MPC methods are better than the hybrid approach, and vice versa. 

For more complicated dynamics where convergence is too slow, our approach can be improved by experimenting with the effect of adaptive gain with learning rates that have hyperbolic or power law decay so that the learning rate is reduced over time, facilitating finer control adjustments. 

In the quantum circuits used for selecting the optimal control policies, the CNOT connections formed a structured entanglement pattern across qubits, allowing information to be shared across the system. However, optimizing such circuits can be challenging due to barren plateaus that create regions in the parameter space where the gradients vanish. Thus, it would prove very interesting to investigate non-repeating entanglement patterns by judiciously placing the CNOT gates in the circuits. We note that the quantum circuits can be further refined by using alternative encoding schemes for mapping the classical data into quantum states, such as amplitude encoding, to potentially reduce the initial number of state rotations. Moreover, more efficient training schemes, such as layerwise techniques like quantum curriculum learning, could be used to improve training, and  hardware-efficient entanglement ans\"{a}tze could be used to improve trainability.



\end{document}